\DeclareSymbolFont{matha}{OML}{txmi}{m}{it}
\DeclareMathSymbol{\varv}{\mathord}{matha}{118}
\newcommand{\trans}{^{\mathsf{T}}}
\newcommand{\herm}{^{\H}}
\begin{document}
	\title{STAR-RIS Assisted Cell-Free Massive MIMO System Under Spatially-Correlated Channels}
	\author{Anastasios Papazafeiropoulos, Hien Quoc Ngo, Pandelis Kourtessis, Symeon Chatzinotas
		\thanks{Copyright (c) 2015 IEEE. Personal use of this material is permitted. However, permission to use this material for any other purposes must be obtained from the IEEE by sending a request to pubs-permissions@ieee.org.}
		 \thanks{A. Papazafeiropoulos is with the Communications and Intelligent Systems Research Group, University of Hertfordshire, Hatfield AL10 9AB, U. K., and with SnT at the University of Luxembourg, Luxembourg. Hien Quoc Ngo is with the School of Electronics, Electrical Engineering
			and Computer Science, Queen’s University Belfast, Belfast BT7 1NN, U.K. P. Kourtessis is with the Communications and Intelligent Systems Research Group, University of Hertfordshire, Hatfield AL10 9AB, U. K. S. Chatzinotas is with the SnT at the University of Luxembourg, Luxembourg. A. Papazafeiropoulos was supported  by the University of Hertfordshire's 5-year Vice Chancellor's Research Fellowship. The work of H. Q. Ngo was supported by the U.K. Research and Innovation Future Leaders Fellowships under Grant MR/X010635/1. S. Chatzinotas   was supported by the National Research Fund, Luxembourg, under the project RISOTTI. E-mails: tapapazaf@gmail.com, hien.ngo@qub.ac.uk p.kourtessis@herts.ac.uk, symeon.chatzinotas@uni.lu.}}
	\maketitle\vspace{-1.7cm}
	\begin{abstract}
		\textcolor{black}{This paper investigates the performance of  downlink simultaneous transmitting and reflecting reconfigurable
		intelligent surface (STAR-RIS)-assisted cell-free (CF) massive multiple-input multiple-output (mMIMO) systems, where user equipments (UEs) are located on both sides of the RIS}.
		We account for correlated Rayleigh fading and multiple antennas per access point (AP), while the maximum ratio (MR) beamforming is applied for the design of the active beamforming in terms of  instantaneous channel state information (CSI). Firstly, we rely on an aggregated channel estimation approach that reduces 
		the overhead required for channel estimation while providing sufficient information for data processing. We  obtain the normalized
		mean square error (NMSE) of the channel estimate per AP, and design the passive beamforming (PB) of the surface based on the long-time statistical CSI. Next, we derive the received signal in the asymptotic regime of numbers of APs and surface elements. Then, we obtain a closed-form expression of the downlink achievable rate for arbitrary numbers of APs  and STAR-RIS elements under statistical CSI.   Finally, based on the derived expressions, the numerical results show the feasibility and the advantages of deploying a STAR-RIS into conventional CF mMIMO systems. In particular, we theoretically analyze the properties of STAR-RIS-assisted CF mMIMO systems and reveal explicit insights in terms of the impact of channel correlation, the number of surface elements, and the pilot contamination on the achievable rate.
	\end{abstract}
	\begin{keywords}
		Simultaneously transmitting and reflecting RIS, cell-free mMIMO, correlated Rayleigh fading, imperfect CSI,  6G networks.
	\end{keywords}
	
	\section{Introduction}
	Massive multiple-input multiple-output (mMIMO) has already been identified as one of the main advancements in fifth-generation (5G)  networks with remarkable improvements in the rate, latency, reliability,  and coverage \cite{Boccardi2014}. MMIMO  can be deployed in not only collocated but also in distributed setups. The advantage of the collocated mMIMO setup is the low backhaul requirements. On the other hand, the benefit of a distributed mMIMO architecture, where the transmit antennas are scattered across a large area, is the extended coverage by exploiting diversity against shadow fading.  Unfortunately, both architectures are cellular-based since each base station (BS) serves the user equipments (UEs) within the cell boundaries, i.e., the  inter-cell interference is a significant limiting factor. 
	
	In this direction,  cell-free (CF) mMIMO has appeared as a promising solution to mitigate the effect of inter-cell interference with the combination of centralised processing and  CF design \cite{Ngo2017,Nayebi2017,Ngo2018,Buzzi2019}.  Basically, CF mMIMO relies on a distributed mMIMO layout, where a large number of access points allocated in a given area serve a smaller number of UEs. All access points (APs) cooperate to \textcolor{black}{serve} the UEs in the same  frequency \textcolor{black}{resources} by alleviating any cell boundaries. CF mMIMO presents an enhanced performance by reaping the benefits of network MIMO and distributed mMIMO together within the close  proximity of UEs to the APs. Note that even maximum ratio (MR) beamforming and match filtering result in high spectral efficiency (SE) in the downlink and uplink, respectively. Hence, CF  has been at the centre of  attention in the last few years. However, a large-scale deployment of APs can result in high costs regarding hardware and power since even single-antenna  APs are accompanied by a radio frequency (RF) chain. Also, in urban areas, uniform coverage is not attainable due to the uneven layout of tall buildings and the blocked areas due to obstacles. The simple solution of just deploying more APs to improve their coverage is not affordable since it leads to excessive deployment cost and energy consumption.
	
	Fortunately, reconfigurable intelligent surface (RIS) has emerged as a promising technology that can shape the propagation environment without any increase in the transmit power or the number of APs \cite{Wu2019,Basar2019,DiRenzo2020}. In particular, a RIS consists of an array of a large number of passive reflective elements  that induce phase shifts on the electromagnetic impinging waves towards improving the coverage, the SE, and the energy efficiency without any additional power \cite{DiRenzo2020,Wu2020}. Also, not only a RIS has a low deployment cost and can be installed easily, but its performance is comparable to mMIMO with a lower number of antennas and reduced transmit power  \cite{DiRenzo2020}. Thus, the integration of RIS is of great practical interest to decrease the equipment concerning the number of APs, the transmit power,  and improve the quality of service of blocked users.
	
	Although RIS has been suggested for  different communication scenarios due to its  numerous advantages \cite{Wu2019,Basar2019,DiRenzo2020,Kammoun2020,Yang2020b,Pan2020, Papazafeiropoulos2021,Papazafeiropoulos2021a,Papazafeiropoulos2022b}, most works have assumed that both the transmitter and receiver are located on the same side of  the  panel. In practise,  real-world applications require UEs located on both sides of the RIS. Thanks to recent advancements in programmable metamaterials, a new technology has appeared known as  simultaneous transmitting and reflecting RIS (STAR-RIS) that provides full space coverage by adjusting  the amplitudes and the phases of impinging waves \cite{Xu2021,Mu2021,Niu2021,Wu2021,Niu2022,Papazafeiropoulos2023}.  For example, in \cite{Mu2021}, it was shown that STAR-RIS has better performance than reflecting only RIS and three protocols were presented for optimising the panel performance namely energy splitting (ES), mode switching (MS), and time switching (TS). \textcolor{black}{Recently, in \cite{Papazafeiropoulos2023}, the achievable rate for  STAR-RIS assisted mMIMO channels was studied while assuming spatially correlated channels.}
	
	Many works have relied on the perfect channel state information (CSI) assumption, which is unrealistic for practical RIS-aided systems due to their lack of any  radio sources to receive or send any pilot signals \cite{Basar2019,DiRenzo2020}. On this topic, early papers on channel estimation (CE) protocols proposed ON/OFF strategies, which have large training overhead \cite{Mishra2019}. Other works with lower overheads are based on the ideas of RIS elements grouping or channel sparsity exploitation \cite{Yang2020b,He2019}. Notably, additionally to the training overhead the RIS optimization based on instantaneous CSI, changing at each coherence interval, increases the complexity and overhead.
	
	To address the burden of RIS optimization based on instantaneous CSI,  authors have turned their attention to designing  the RIS parameters by exploiting only statistical CSI in terms of large statistics such as correlation matrices and path-losses, which are slowly-varying (every several coherence intervals) \cite{Han2019,Zhao2020,Abrardo2021,Papazafeiropoulos2021b,Papazafeiropoulos2021,Zhi2022,Papazafeiropoulos2022a,Papazafeiropoulos2023}. Especially, in \cite{Zhao2020},  a maximisation of the achievable sum rate of a RIS-assisted multi-user multi-input single-output (MU-MISO) system took place by using a two-timescale transmission protocol, where precoding was designed in terms of instantaneous CSI, while the RIS   phase shifts were optimised by using statistical CSI. Moreover,  the two-timescale protocol was supplied to study the impact of hardware impairments on the sum rate and the minimum rate in \cite{Papazafeiropoulos2021} and \cite{Papazafeiropoulos2021b}, respectively. 
	
	Meanwhile most initial works on RIS  considered independent fading, but this is impractical as shown in \cite{Bjoernson2020}. \textcolor{black}{In particular, in \cite{Bjoernson2020}, it was proved that the Rayleigh fading channel model is not physically appearing when using an RIS in an isotropic  scattering environment, thus it should not be used.} Hence, many recent works have taken RIS correlation into consideration for conventional RIS-assisted systems but this consideration  for STAR-RIS systems is limited. Also,  the majority of works on STAR-RIS  have only considered a single UE on each side of the surface, e.g., see \cite{Mu2021}.
	
	Despite the necessity for introducing RIS into CF mMIMO systems,  the current literature is still in its infancy \cite{Zhang2021a,Siddiqi2022,Liu2021,Shi2022,VanChien2022,Shi2022a,Zhang2021b,Ge2022,Zhang2022,Yang2021}. Most works have  formulated optimization problems for RIS-assisted CF mMIMO with certain communication objectives but with the impractical assumptions of perfect CSI. In \cite{Zhang2021a}, a hybrid
	beamforming  scheme was proposed to decompose the original optimization problem into the digital beamforming subproblem and the RIS-based analog beamforming
	subproblem. In \cite{Siddiqi2022}, an  alternate optimization (AO) algorithm was proposed for the solution of these two beamforming problems  by using the sequential programming (SP) method and zero-forcing (ZF) beamforming. In \cite{Liu2021}, the maximization of the energy efficiency of the worst user in a wideband RIS-aided cell-free network took place by an iterative precoding algorithm
	using Lagrangian transform and fractional programming. The uplink SE was considered in \cite{Shi2022}  for spatially correlated RIS-aided cell-free mMIMO systems in the case of Rician fading channels. Moreover, in \cite{VanChien2022}, the aggregated channel, including both the direct and indirect links, was considered. In \cite{Shi2022a}, it was shown that generalized maximum ratio combining could double the achievable date rate over  the maximum ratio combining. The joint
	precoding design at BSs and RISs took place in  \cite{Zhang2021b} in the case of a wideband RIS-aided cell-free network to maximize the network capacity. In \cite{Ge2022}, a generalized  superimposed channel estimation scheme was proposed for an uplink cell-free  mMIMO system to enhance the  wireless coverage  and SE. In \cite{Zhang2022} and \cite{Yang2021},  the secure communication in a RIS-aided cell-free mMIMO system under the presence of  active eavesdropping and   the beamforming optimization for RIS-aided simultaneous wireless  information and power transfer  in cell-free mMIMO networks were investigated, respectively. In \cite{Dai2022},  the two-time scale protocol was considered for optimizing  the achievable uplink rate.
	
	
	\subsection{Motivation and contributions}
	Different from \cite{Zhang2021a,Siddiqi2022,Liu2021,Shi2022,VanChien2022,Shi2022a,Zhang2021b,Ge2022,Zhang2022,Yang2021}, we develop an analytical framework for the study of the downlink of a STAR-RIS-assisted CF mMIMO system by using statistical CSI,  where the aggregate APs-UEs channels, based on imperfect CSI, are considered  with MR beamforming for information decoding. Also, contrary to \cite{VanChien2022}, we account for APs with multiple antennas, while compared to \cite{Dai2022}, we consider correlated Rayleigh fading. \textcolor{black}{Although in \cite{Papazafeiropoulos2023}, we have also assumed correlated fading in STAR-RIS-assisted systems, we have focused on colocated mMIMO systems, while, in this work, we study CF mMIMO systems, which is another network architecture based on a distributed structure  with special characteristics such as enhanced  macro-diversity gain.}
	
	Thus, we avoid the estimation of all individual channels and STAR-RIS optimization at each channel realisation.  Specifically, firstly, we  describe the channel estimation, where each AP estimates the cascaded channel by the linear minimum mean square error (LMMSE)  estimation technique. We introduce the normalized
	mean square error (NMSE) of the channel estimate, and we design the passive beamforming of RISs based on the long-time statistical CSI, where the STAR-RIS parameters, i.e., the amplitudes   and phase shifts are optimised simultaneously. Next, the APs apply MR beamforming  to send their data. A closed-form expression of the downlink sum-SE is obtained in terms of statistical CSI. 
	
	Our main contributions are summarised as follows:
	\begin{itemize}
		\item We consider a STAR-RIS-assisted CF mMIMO
		under spatially correlated channels. All APs have multiple antennas and obtain the LMMSE of the instantaneous cascaded channel in the uplink pilot training phase for each AP-UE link with lower overhead compared to other channel estimation methods such as the ON/OFF strategy. Notably, we account for  pilot contamination based on an arbitrary  	pilot reuse pattern. This is the only work that has introduced STAR-RIS on CF mMIMO systems as far as the authors are aware.
		\item We introduce the NMSE of the channel estimate for each AP. Contrary to \cite{VanChien2022}, our work focuses on STAR-RIS and assumes multiple antennas per AP, which requires a different analysis with more complicate manipulations. Based on the NMSE, we optimize the \textcolor{black}{passive beamforming (PB)} of the STAR-RIS, which includes a simultaneous optimization of the  amplitudes   and phase shifts of the surface.
		\item We show the performance of the system in the large number of APs and surface elements regime, where the small-scale fading,  the additive noise, and the coherent interference average out even with MR beamforming,  while the  received signal includes only the desired  	signal and the coherent interference. 
		\item We  derive a closed-form expression for the  downlink achievable sum SE in terms of only statistical CSI. The analytical expression reveals the impact of spatial correlation, pilot contamination, numbers of APs and elements, and phase shifts of the surface.
		\item Simulation results verify the analytical expressions and present the superiority of STAR-RIS against conventional RIS-assisted CF mMIMO systems.
	\end{itemize}  
	
	\textit{Paper Outline}: The remainder of this paper is organized as follows. Section~\ref{System} introduces the system model of a STAR-RIS-assisted CF mMIMO system with correlated Rayleigh fading. Section~\ref{ChannelEstimation} presents the channel estimation of the cascaded channel the design of the passive beamforming  based on the long-time statistical CSI. Section~\ref{PerformanceAnalysis} presents the downlink data transmission with the derived downlink sum SE. 
	The numerical results are provided in Section~\ref{Numerical}, and Section~\ref{Conclusion} concludes the paper.
	
	\textit{Notation}: Vectors and matrices are represented by boldface lower and upper case symbols, respectively. The notations $(\cdot)^\T$, $(\cdot)^\H$, and $\tr\!\left( {\cdot} \right)$ denote the transpose, Hermitian transpose, and trace operators, respectively. Moreover, the notations  $\EE\left[\cdot\right]$ and $ \nabla(\cdot) $ describe  the expectation  operator and gradient operator, respectively. The notation  $\diag\left(\bA\right) $ denotes a vector with elements equal to the  diagonal elements of $ \bA $, the notation  $\diag\left(\bx\right) $ describes a diagonal  matrix whose elements are $ \bx $. \textcolor{black}{$ O(\cdot) $ denotes  the limiting behavior of a function when the argument tends towards  infinity,  $\log()$ denotes the natural logarithm and 
		$\left\lceil \cdot \right\rceil$ 
		denotes the smallest integer that is larger than or equal to the argument } Also,   $\bb \sim \cC\cN{(\b0,\mathbf{\Sigma})}$ denotes a circularly symmetric complex Gaussian vector with zero mean and a  covariance matrix $\mathbf{\Sigma}$. 
	
	\section{System Model}\label{System}
	In this section, we present a CF mMIMO system assisted by a STAR-RIS. For the sake of demonstrating a practical scenario, we assume that the CF mMIMO system is implemented outdoor while communication takes place with both indoor and outdoor UEs  through a STAR-RIS deployed on the wall of the building hosting the indoor UEs as shown  in Fig~\ref{Fig1}. Also, all UEs are assumed far from the APs or behind obstacles to justify the implementation of a RIS.
	
	In particular, we consider $ M $ multi-antenna APs with $ L $ antennas each, which are  connected to a central processing unit
	(CPU) and serve simultaneously $ K $ \textcolor{black}{edge} UEs in total on the same time and frequency resource.\footnote{\textcolor{black}{It would also be very interesting to consider non-edge users, where the APs can be located at both sides of the surface but this is left for future work.}}  
The group of UEs consists of  $ \mathcal{K}_{t}=\{1,\ldots,K_{t} \} $ UEs  located in the transmission region $ (t) $ and $ \mathcal{K}_{r}=\{1,\ldots,K_{r} \} $ UEs located in the reflection region $ (r) $, respectively. In other words, $ K=K_{t}+K_{r} $. For the sake of a better representation of the analysis below, we  denote as $ \mathcal{W} = \{w_{1}, w_{2}, ..., w_{K}\}  $  a set that defines the RIS operation mode for each of $ K $ UEs in a unified way. Hence,  if the $ k $th UE is located behind the STAR-RIS ($ k\in   \mathcal{K}_{t}$), then $ w_{k} = t $, while $ w_{k} = r $, if  the $ k $th UE is found in front of  the STAR-RIS ($ k\in   \mathcal{K}_{r}$), i.e.,  the  APs and the $ k $th UE are found at the same side. We assume  no direct links exist between the APs and UEs in both regions due to obstacles.  
	
	\begin{figure}[!h]
		\begin{center}
			\includegraphics[width=0.7\linewidth]{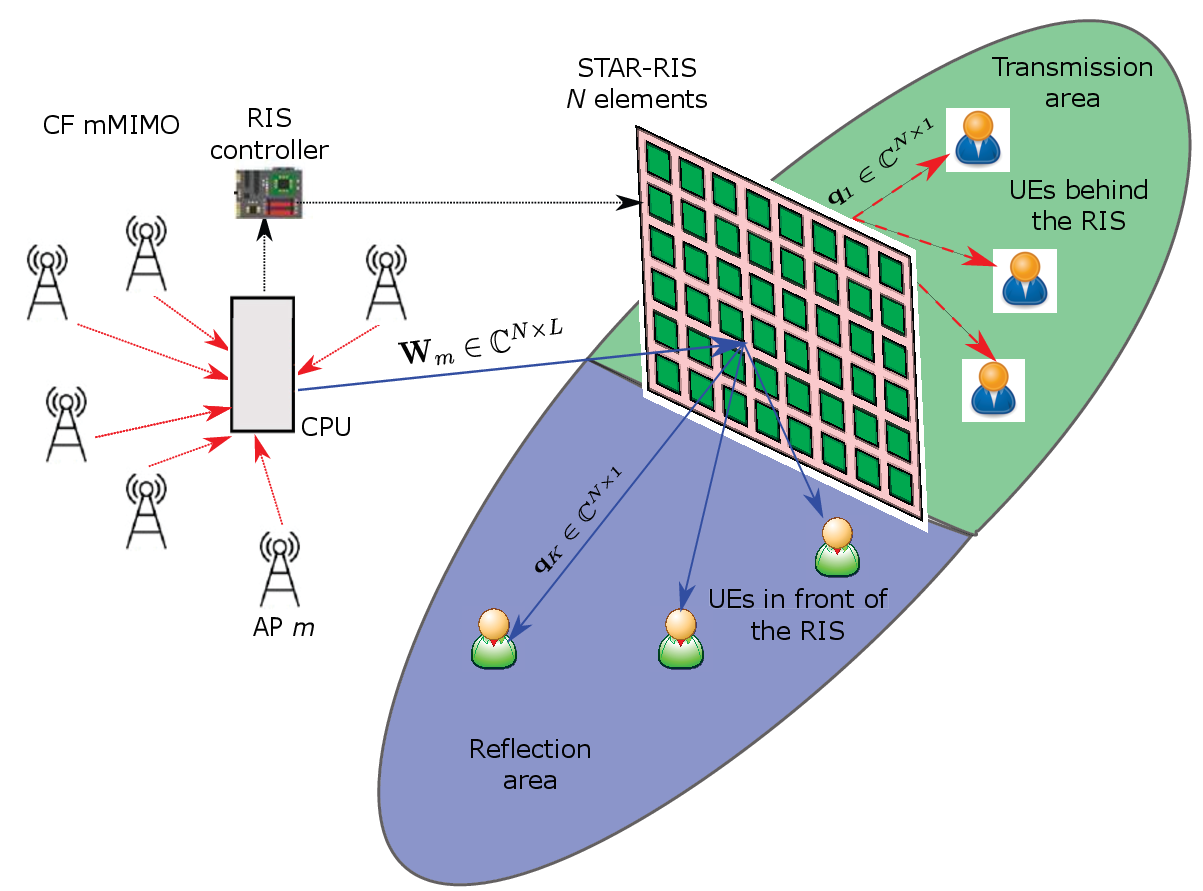}
			\caption{{ A CF mMIMO STAR-RIS-assisted system with multiple antennas per AP and with multiple UEs at transmission and reflection regions.  }}
			\label{Fig1}
		\end{center}
	\end{figure} 
	
	The STAR-RIS, being one of the focal points of the system model,  consists of a uniform planar array (UPA) that can modify the amplitudes and phases of impinging signals. Each row of the array includes $ N_{\mathrm{h}} $  elements and  each column consists of $ N_{\mathrm{v}} $ elements, i.e.,  the total number of RIS elements is $ N= N_{\mathrm{h}} \times N_{\mathrm{v}} $ and defines a set of $\mathcal{N}=\{1, \ldots, N \}$ elements. The ability of the STAR-RIS is characterized by the simultaneous configuration of  the transmitted ($ t $) and reflected ($ r $) signals by two independent coefficients.  Especially, the transmitted 	and reflected signal by the $ n $th RIS element are described by   $ t_{n} =( {\beta_{n}^{t}}e^{j \phi_{n}^{t}})s_{n}$ and $ r_{n}=( {\beta_{n}^{r}}e^{j \phi_{n}^{r}})s_{n} $, respectively, where $ s_{n} $ denotes the impinging signal.  We assume that the amplitude and phase parameters  $ {\beta_{n}^{k}}\in [0,1] $ and $ \phi_{n}^{k} \in [0,2\pi)$, where the $  k $th UE can be in any of the two regions that  corresponds also to the  RIS mode, i.e.  transmission ($ t $) or reflection  ($ r $)
	\cite{Xu2021}, are independent.\footnote{\textcolor{black}{In practice, the phases of the reflecting and transmitting coefficient are coupled with each other.  However, this consideration requires a separate analysis regarding the optimization, and is left for future work \cite{Xu2022}.}} Although  $ \phi_{n}^{t} $ and $ \phi_{n}^{r} $ can be chosen independently,  the choice of the amplitudes is based on the correlation expressed by the law of energy conservation as
	\begin{align}
		(\beta_{n}^{t})^{2}+(\beta_{n}^{r})^{2}=1,  \forall n \in \mathcal{N}.
	\end{align}
	
	Both the amplitudes and the phase shifts are adjusted by a controller through a backhaul link that enables the exchange of information between the APs and the STAR-RIS. Below, we denote $ \theta_{k}^{w_{k}}=e^{j\phi_{k}^{w_{k}}} $ for the sake of better exposition.
	
	\subsection{STAR-RIS model Protocols}
	A STAR-RIS generally operates in the ES mode, while the MS mode can also  be used, which has lower performance but also lower complexity. Another option is the TS protocol, but herein, we will focus on the  first two protocols due to limited space, while the study of the   TS protocol is left for future work.
	
	\subsubsection{ES protocol} All RIS elements serve simultaneously all UEs despite their locations. In other words, UEs can be found in both   $ t $ and $ r $ regions. The  PB for UE $ k
	$  is expressed as $ \bPhi_{w_{k}}^{\mathrm{ES}}=\diag( {\beta_{1}^{w_{k}}}\theta_{1}^{w_{k}}, \ldots,  {\beta_{N}^{w_{k}}}\theta_{N}^{w_{k}}) \in \mathbb{C}^{N\times N}$, where $ \beta_{n}^{w_{k}} \ge 0 $, $ 		(\beta_{n}^{t})^{2}+(\beta_{n}^{r})^{2}=1 $, and $ |\theta_{i}^{w_{k}}|=1, \forall n \in \mathcal{N} $.

	\subsubsection{MS protocol} This protocol suggests the  partitioning of the  RIS elements   into two groups of $ N_{t} $ and $ N_{r} $ elements that serve UEs in the  $  t$ and $ r $ regions, respectively. In other words,  $ N_{t}+N_{r}=N $. The PB for $  k \in \mathcal{K}_{t} $ or $  k \in \mathcal{K}_{r} $
	is given by $ \bPhi_{w_{k}}^{\mathrm{MS}}=\diag( {\beta_{1}^{w_{k}}} \theta_{1}^{w_{k}}, \ldots,  {\beta_{N}^{w_{k}}}\theta_{N}^{w_{k}}) \in \mathbb{C}^{N\times N}$, where $ \beta_{n}^{w_{k}}\in \{0,1\}$, $ 	(\beta_{n}^{t})^{2}+(\beta_{n}^{r})^{2}=1 $, and $ |\theta_{i}^{w_{k}}|=1, \forall n \in \mathcal{N} $. In other words,  the amplitude coefficients for transmission and reflection are restricted to binary values, which means that the MS protocol is a special case of the ES protocol, i.e., it is inferior.
	
	\subsection{Channel Model}\label{ChannelModel} 
	We assume  narrowband quasi-static block fading channels based on the standard time-division-duplex (TDD) protocol, which is used in CF mMIMO systems and enables channel reciprocity. Specifically, each block has a duration of $\tau_{\mathrm{c}}$ channel uses, while $\tau$ channel uses are allocated for the uplink training phase and $\tau_{\mathrm{c}}-\tau$ channel uses are allocated for the downlink data transmission phase.
	
	Let  $ \bw_{ml} \in \mathbb{C}^{N \times 1}$, and  $ \bq_{k} \in \mathbb{C}^{N \times 1}$ be  the channel between the $ l $th antenna of AP $m  $ and the RIS, and the channel between the RIS and UE $ k $, respectively. Note that $ 	\bW_{m}=[\bw_{m1}, \ldots,\bw_{mL}] \in \mathbb{C}^{N \times  L}$ is the concatenated downlink channel between AP $ m $ and the RIS. Especially, the links including the RIS  underlie unavoidable correlated Rayleigh fading \cite{Bjoernson2020}.
	Thus, we have
	\begin{align}
		\bW_{m}&=\sqrt{\tilde{ \beta}_{m}}\textcolor{black}{\bR_{\mathrm{RIS},m}}^{1/2}\bD_{m} \textcolor{black}{\bR_{\mathrm{AP},m}} ^{1/2},\label{Wchannel}\\
		\bq_{k}&=\sqrt{\tilde{ \beta}_{k}}\textcolor{black}{\bR_{\mathrm{RIS},k}}^{1/2}\bc_{k},	\label{Qchannel}
	\end{align}
	where  $ \bR_{\mathrm{RIS},m},  \in \mathbb{C}^{N \times N} $,  $ \bR_{\mathrm{RIS},k},  \in \mathbb{C}^{N \times N} $, and $ {\bR}_{\mathrm{AP}} \in \mathbb{C}^{L \times L} $ express the deterministic Hermitian-symmetric positive semi-definite correlation matrices at the RIS and the AP $ m $, which are assumed to be known by the network. Also, $ \mathrm{vec}(\bD_{m})\sim \mathcal{CN}\left(\b0,\Id_{LN}\right) $ and $ \bc_{k} \sim \mathcal{CN}\left(\b0,\Id_{N}\right) $ express the corresponding fast-fading vectors. Note that, in the case of isotropic
	scattering with uniformly distributed multipath components, $ \bR_{\mathrm{RIS},m} $ and $ \bR_{\mathrm{RIS},k} $ are modeled as in \cite{Bjoernson2020}. In particular, we have $ [\bR]_{i,j}=d_{\mathrm{H}}d_{\mathrm{V}}\mathrm{sinc}(2\|\bu_{i}-\bu_{j}\|/\lambda) $, where $ \lambda $ is the wavelength, $\mathrm{sinc}(x)=\sin(\pi x)/(\pi x)  $ is the sinc
	function, $ d_{\mathrm{H}}d_{\mathrm{V}} $ is the  size of each element of the RIS with  $ d_{\mathrm{H}} $ and $ d_{\mathrm{V}} $ being  the horizontal width and 
	the vertical height of each RIS element. Also, the location of the $ n $th element
	with respect to the origin is given by $ \bu_{n} =[0,\mod(n-1)N_{\mathrm{h}}d_{\mathrm{H}}, \lfloor(n-1)N_{\mathrm{h}}\rfloor d_{\mathrm{V}}]^{\T}$.
	Moreover, $\tilde{ \beta}_{m} $ and $\tilde{ \beta}_{k} $  express the path-losses  of the $ m $th AP-RIS and RIS-UE $ k $ links, respectively.  Note that the channel model in \eqref{Wchannel} is different from most related works, which assume that $ \bW_{m} $ is deterministic for the sake of analytical tractability \cite{Nadeem2020,Papazafeiropoulos2021}.$  $

	Given the PB,  the aggregated channel between  AP $ m $ and UE $ k $, consisted of the   cascaded channel,   is 
	\begin{align}
		\bh_{mk}= \bW_{m}^{\H}\bPhi_{w_{k}} \bq_{k},\label{cascaded}
	\end{align}
	and has a variance  $ \bR_{mk}=\EE\{\bh_{mk}\bh_{mk}^{\H}\} $ given by
	\begin{align}
		\bR_{mk}=\tr(\tilde{\bR}_{mk})\bR_{\mathrm{AP},m},\label{cov1}
	\end{align}
	where  $ \tilde{\bR}_{mk}=\tilde{\beta}_{mk}\bR_{\mathrm{RIS},m} \bPhi_{w_{k}} \bR_{\mathrm{RIS},k}  \bPhi_{w_{k}}^{\H} $. We have used the independence between $ \bW_{m} $ and $ \bq_{k} $, $\tilde{\beta}_{mk}= \tilde{ \beta}_{m}\tilde{ \beta}_{k} $, $ \EE\{	\bq_{k}	\bq_{k}^{\H}\} =\tilde{ \beta}_{k} \bR_{\mathrm{RIS},k}$,   $ \EE\{\bV \bB\bV^{\H}\} =\tr (\bB) \Id_{M}$ with $\bB  $ being a deterministic square matrix, and $ \bV $ being any matrix with independent and identically distributed (i.i.d.) entries of zero mean and unit variance. In the case that  $\bR_{\mathrm{RIS},m} =\bR_{\mathrm{RIS},k}=\Id_{N} $, $ \bR_{mk} $ does not depend on the phase shifts but only on the amplitudes, as also mentioned in \cite{Papazafeiropoulos2022}. For convenience, we provide below the lemma in \cite[Lem. 1]{Bjornson2015}.
	\begin{lemma}\label{lem0proof1}
		Let $ \bA $ be deterministic matrices, and $ \bx_{mk} $ is a vector of i.i.d. $ \mathcal{CN}(0,1) $ elements. Then, we have
		\begin{align}
			\EE\{|\bx_{mk}^{\H}\bA\bx_{mk}|^{2}\}=\tr(\bA\bA^{\H})+(\tr(\bA))^{2}.\label{exp1}
		\end{align}
	\end{lemma}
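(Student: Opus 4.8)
The plan is to expand the fourth-order moment on the left-hand side into second-order correlations of the i.i.d.\ entries of $\bx_{mk}$ via the complex Gaussian moment (Wick/Isserlis) theorem, and then to recognise the two surviving index contractions as the two terms on the right. Write $\bx\defeq\bx_{mk}\in\mathbb{C}^{d}$ with entries $x_{1},\dots,x_{d}$, expand the Hermitian form coordinate-wise as $\bx^{\H}\bA\bx=\sum_{i,j}\bar x_{i}A_{ij}x_{j}$, and use $\overline{\bx^{\H}\bA\bx}=\sum_{k,l}x_{k}\bar A_{kl}\bar x_{l}$ to obtain
\begin{align}
\EE\{|\bx^{\H}\bA\bx|^{2}\}=\sum_{i,j,k,l}A_{ij}\bar A_{kl}\,\EE\{\bar x_{i}x_{j}x_{k}\bar x_{l}\}.\nn
\end{align}

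The next step is to evaluate $\EE\{\bar x_{i}x_{j}x_{k}\bar x_{l}\}$. Since the $x_{i}$ are i.i.d.\ $\mathcal{CN}(0,1)$, every mixed moment with an unequal number of conjugated and unconjugated factors vanishes, and the only nonzero second moments are $\EE\{x_{a}\bar x_{b}\}=\delta_{ab}$; hence the fourth moment equals the sum over the ways of pairing each unconjugated factor with a conjugated one, i.e.\ $\EE\{\bar x_{i}x_{j}x_{k}\bar x_{l}\}=\delta_{ij}\delta_{kl}+\delta_{ik}\delta_{jl}$ (this expression also covers the coincidence $i=j=k=l$, because $\EE\{|x_{i}|^{4}\}=2$). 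Substituting and summing, the $\delta_{ij}\delta_{kl}$ contribution equals $\big(\sum_{i}A_{ii}\big)\big(\sum_{k}\bar A_{kk}\big)=|\tr(\bA)|^{2}$, while the $\delta_{ik}\delta_{jl}$ contribution equals $\sum_{i,j}A_{ij}\bar A_{ij}=\sum_{i,j}|A_{ij}|^{2}=\tr(\bA\bA^{\H})$, which is exactly the claimed identity (the paper writes $(\tr(\bA))^{2}$ for the first term, which agrees with $|\tr(\bA)|^{2}$ when $\tr(\bA)$ is real).

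A more compact, coordinate-free route is also available and I would mention it as a cross-check: since $\bx^{\H}\bA\bx$ is a scalar, $|\bx^{\H}\bA\bx|^{2}=(\bx^{\H}\bA\bx)\,\overline{\bx^{\H}\bA\bx}=\tr(\bA^{\H}\bx\bx^{\H}\bA\bx\bx^{\H})$; taking the expectation inside the trace and invoking the standard identity $\EE\{\bx\bx^{\H}\bB\bx\bx^{\H}\}=\tr(\bB)\Id+\bB$ for deterministic $\bB$ and circularly symmetric $\bx\sim\mathcal{CN}(\b0,\Id)$, with $\bB=\bA$, gives $\tr\!\big(\bA^{\H}(\tr(\bA)\Id+\bA)\big)=|\tr(\bA)|^{2}+\tr(\bA\bA^{\H})$. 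I expect the only genuinely delicate point — and the one worth stating carefully — to be the use of the \emph{complex} (rather than real) Gaussian fourth-moment formula: circular symmetry annihilates all contractions of the form $\EE\{x_{a}x_{b}\}$, so only two of the three pairings present in the real-Gaussian case survive, which is precisely why the right-hand side has two terms and not three; everything else is routine index bookkeeping. Since this lemma is quoted verbatim from \cite[Lem.~1]{Bjornson2015}, one may alternatively simply cite that reference.
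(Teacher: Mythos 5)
Your proof is correct. Note that the paper does not actually prove this lemma at all—it quotes it verbatim from \cite[Lem.~1]{Bjornson2015}—so your Wick-expansion argument (and the coordinate-free cross-check via $\EE\{\bx\bx^{\H}\bA\bx\bx^{\H}\}=\tr(\bA)\Id+\bA$) supplies a self-contained derivation of a result the paper merely cites. Both of your routes are sound: the fourth-moment formula $\EE\{\bar x_{i}x_{j}x_{k}\bar x_{l}\}=\delta_{ij}\delta_{kl}+\delta_{ik}\delta_{jl}$ correctly handles the coincident-index case since $\EE\{|x_{i}|^{4}\}=2$, and your observation that circular symmetry kills the third (real-Gaussian) pairing is exactly the right point to emphasize. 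Your caveat that the right-hand side is properly $|\tr(\bA)|^{2}+\tr(\bA\bA^{\H})$, coinciding with the paper's $(\tr(\bA))^{2}$ only when $\tr(\bA)$ is real, is also apt; in every application in the paper (e.g.\ in the proof of Lemma~2) the matrix $\bA$ is of the form $\bR^{1/2}\bPhi^{\H}\bR\bPhi\bR^{1/2}$, hence Hermitian positive semi-definite with real trace, so no discrepancy arises.
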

	
	Also, regarding the cascaded channel given by \eqref{cascaded}, we observe that it is the product of weighted complex Gaussian and spatially correlated random variables. Despite the statistical complexity of \eqref{cascaded}, the following lemma provides useful expressions with respect to the expectation of the channel, which are required in Sec. \ref{PerformanceAnalysis}.C.
	
	\begin{lemma}\label{lem0proof}
		Let $ \bA $ and $ \bB $ be deterministic matrices, then we have \eqref{exp2}-\eqref{exp6} at the top of the next page.
		\begin{figure*}
			\begin{align}
				&\EE\{\bh_{mk}^{\H}\bA\bh_{mk}\}=\tr(\bA \bR_{mk})\label{exp2},\\
				&\EE\{ \bh_{mk}^{\H}\bA\bh_{mk}\bh_{nk}^{\H}\bB\bh_{nk}\}\!=\!\tr(\bR_{\mathrm{AP},m}\bA)\!\tr(\bR_{\mathrm{AP},m}\bB)\!\big(\!\tr(\tilde{\bR}_{mk})\tr(\tilde{\bR}_{nk})\!+\!\tr(\tilde{\bR}_{mk}\tilde{\bR}_{nk})\!\big), m\ne n\label{exp3}\\
				&	\EE\{ \bh_{mk}^{\H}\bA\bh_{ml}\bh_{nl}^{\H}\bB\bh_{nk}\}=\tr(\bR_{\mathrm{AP},m}\bA)\tr(\bR_{\mathrm{AP},m}\bB)	\tr(\tilde{\bR}_{mk}\tilde{\bR}_{nl}), m\ne n\label{exp4}\\
				&	\EE\{| \bh_{mk}^{\H}\bA\bh_{mk}|^{2}\}\approx \tr(\bR_{\mathrm{AP},m}\bA\bR_{\mathrm{AP},m}\bA^{\H})\Big(\!\!\tr\big(\tilde{\bR}_{mk}^{2}\big)+\big(\!\tr\big(\tilde{\bR}_{mk}\big)\!\big)^{2}\!\Big),\label{exp5}\\
				&	\EE\{| \bh_{mk}^{\H}\bA\bh_{ml}|^{2}\}\approx\tr(\bR_{\mathrm{AP},m}\bA\bR_{\mathrm{AP},m}\bA^{\H})\tr(\tilde{\bR}_{ml})\tr(\tilde{\bR}_{mk}),k\ne l\label{exp6}.
			\end{align}
			\hrulefill
		\end{figure*}
	\end{lemma}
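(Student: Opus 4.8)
The plan is to separate the two independent layers of randomness inside $\bh_{mk}=\bW_m^{\H}\bPhi_{w_k}\bq_k$ and integrate them out in turn. Substituting \eqref{Wchannel}--\eqref{Qchannel} puts the cascaded channel in the form $\bh_{mk}=\sqrt{\tilde{\beta}_m}\,\bR_{\mathrm{AP},m}^{1/2}\bD_m^{\H}\bv_{mk}$ with $\bv_{mk}\triangleq\bR_{\mathrm{RIS},m}^{1/2}\bPhi_{w_k}\bq_k$. Since $\bv_{mk}$ depends only on $\bq_k$ it is independent of $\bD_m$, and from $\EE\{\bD_m^{\H}\bv\bv^{\H}\bD_m\}=\|\bv\|^2\Id_L$ one sees that, conditioned on $\bv_{mk}$, $\bh_{mk}\sim\mathcal{CN}(\b0,\tilde{\beta}_m\|\bv_{mk}\|^2\bR_{\mathrm{AP},m})$, and moreover the $\{\bh_{mk}\}_m$ are conditionally independent given $\{\bq_k\}$ because the $\bD_m$ are independent. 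Hence each left-hand side becomes a Gaussian moment in the $\bD_m$ given the RIS-side vectors, followed by a Gaussian moment in the $\bc_k$, and both layers are evaluated by Lemma~\ref{lem0proof1} together with cyclicity of the trace. In particular \eqref{exp2} is immediate: $\EE\{\bh_{mk}^{\H}\bA\bh_{mk}\}=\tr(\bA\,\EE\{\bh_{mk}\bh_{mk}^{\H}\})=\tr(\bA\bR_{mk})$ by \eqref{cov1}.

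For \eqref{exp3}, $m\neq n$ makes $\bh_{mk}^{\H}\bA\bh_{mk}$ and $\bh_{nk}^{\H}\bB\bh_{nk}$ conditionally independent given $\bq_k$, so the expectation factors into $\tilde{\beta}_m\tilde{\beta}_n\tr(\bR_{\mathrm{AP},m}\bA)\tr(\bR_{\mathrm{AP},n}\bB)\,\EE\{\|\bv_{mk}\|^2\|\bv_{nk}\|^2\}$. Writing $\|\bv_{mk}\|^2=\bc_k^{\H}\bM_m^{\H}\bM_m\bc_k$ with $\bM_m\triangleq\sqrt{\tilde{\beta}_k}\bR_{\mathrm{RIS},m}^{1/2}\bPhi_{w_k}\bR_{\mathrm{RIS},k}^{1/2}$ and invoking the two-matrix form of Lemma~\ref{lem0proof1} (obtained from it by polarization), $\EE\{(\bc^{\H}\bP\bc)(\bc^{\H}\bQ\bc)\}=\tr(\bP)\tr(\bQ)+\tr(\bP\bQ)$ for Hermitian $\bP,\bQ$, gives the two terms; they are put in the stated shape by the identities $\tilde{\beta}_m\tr(\bM_m^{\H}\bM_m)=\tr(\tilde{\bR}_{mk})$ and $\tilde{\beta}_m\tilde{\beta}_n\tr(\bM_m^{\H}\bM_m\bM_n^{\H}\bM_n)=\tr(\tilde{\bR}_{mk}\tilde{\bR}_{nk})$, which follow from the definition of $\tilde{\bR}$ and cyclic reordering. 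Equation \eqref{exp4} is the same scheme with the four factors grouped into the $\bD_m$-pair $(\bh_{mk},\bh_{ml})$ and the $\bD_n$-pair $(\bh_{nl},\bh_{nk})$, conditionally independent given $\{\bq_k,\bq_l\}$: using $\EE_{\bD_m}\{\bD_m\bC\bD_m^{\H}\}=\tr(\bC)\Id_N$ one obtains $\EE\{\bh_{mk}^{\H}\bA\bh_{ml}\mid\bq\}=\tilde{\beta}_m\tr(\bR_{\mathrm{AP},m}\bA)\,\bv_{mk}^{\H}\bv_{ml}$ (and likewise for the $n$-pair), and multiplying the two and integrating out first $\bq_l$ then $\bq_k$ collapses $(\bv_{mk}^{\H}\bv_{ml})(\bv_{nl}^{\H}\bv_{nk})$ into a single trace that equals $\tr(\tilde{\bR}_{mk}\tilde{\bR}_{nl})$ after cyclic rearrangement.

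The quartic identities \eqref{exp5}--\eqref{exp6} carry an ``$\approx$'', but each in fact has an exact closed form from which the stated expression is extracted by dropping one summand. For \eqref{exp5}, conditioning on $\bv_{mk}$ and applying Lemma~\ref{lem0proof1} to the conditionally Gaussian $\bh_{mk}$ gives $\EE\{|\bh_{mk}^{\H}\bA\bh_{mk}|^2\mid\bv_{mk}\}=\tilde{\beta}_m^{2}\|\bv_{mk}\|^{4}$ times the conditional quartic $\tr(\bR_{\mathrm{AP},m}\bA\bR_{\mathrm{AP},m}\bA^{\H})+|\tr(\bR_{\mathrm{AP},m}\bA)|^{2}$, while $\EE\{\|\bv_{mk}\|^{4}\}=(\tr(\bM_m^{\H}\bM_m))^{2}+\tr((\bM_m^{\H}\bM_m)^{2})$, again by Lemma~\ref{lem0proof1}, becomes $(\tr(\tilde{\bR}_{mk}))^{2}+\tr(\tilde{\bR}_{mk}^{2})$; the exact value is the product of this RIS-side factor with the full AP-side factor, and \eqref{exp5} keeps only the $\tr(\bR_{\mathrm{AP},m}\bA\bR_{\mathrm{AP},m}\bA^{\H})$ part of the latter. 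For \eqref{exp6} the vectors $\bh_{mk},\bh_{ml}$ share $\bD_m$, so instead of independence one expands the fourth-order moment of the i.i.d.\ entries of $\bD_m$ directly (equivalently, uses the joint conditional Gaussianity of $(\bD_m^{\H}\bv_{mk},\bD_m^{\H}\bv_{ml})$); the two surviving pairings yield $\|\bv_{mk}\|^{2}\|\bv_{ml}\|^{2}\tr(\bR_{\mathrm{AP},m}\bA\bR_{\mathrm{AP},m}\bA^{\H})$ and $|\bv_{mk}^{\H}\bv_{ml}|^{2}|\tr(\bR_{\mathrm{AP},m}\bA)|^{2}$, and since $k\neq l$ makes $\bq_k,\bq_l$ independent one finds $\tilde{\beta}_m^{2}\EE\{\|\bv_{mk}\|^{2}\|\bv_{ml}\|^{2}\}=\tr(\tilde{\bR}_{mk})\tr(\tilde{\bR}_{ml})$ and $\tilde{\beta}_m^{2}\EE\{|\bv_{mk}^{\H}\bv_{ml}|^{2}\}=\tr(\tilde{\bR}_{mk}\tilde{\bR}_{ml})$, so dropping the second summand gives \eqref{exp6}.

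I expect the main obstacle to be the bookkeeping rather than any single step: for each index pattern one must correctly identify which of the $\{\bD_m\}$ and $\{\bq_k\}$ are shared among the (up to four) channel factors, since this decides whether the relevant conditional law is a product of independent Gaussians or a correlated Gaussian pair, hence which version of Lemma~\ref{lem0proof1} applies and how many cross-pairings survive; getting this wrong for even one factor changes the answer. The accompanying nuisance is the trace reshuffling that identifies the residual products of the $\bM_m$-matrices (and of the $\bR_{\mathrm{RIS},m}$, $\bR_{\mathrm{RIS},k}$ and $\bPhi_{w_k}$ factors inside them) with traces of products of the $\tilde{\bR}_{mk}$; this is routine but must be done carefully, since $\tilde{\bR}_{mk}$ is in general not Hermitian and only cyclicity of the trace --- not transposition --- may be used.
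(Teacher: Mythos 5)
Your proposal is correct and rests on the same two-layer Gaussian computation as the paper's Appendix A: integrate out the AP-side factor $\bD_m$ via $\EE\{\bV\bC\bV^{\H}\}=\tr(\bC)\Id$, then handle the remaining quadratic/quartic forms in $\bc_k$ with Lemma~\ref{lem0proof1}, using independence across $m\ne n$ and across $k\ne l$ exactly where the paper does. The one place you genuinely diverge is \eqref{exp5}--\eqref{exp6}: the paper injects the approximation up front by replacing $\bW_m^{\H}\bPhi_{w_k}\bq_k\bq_k^{\H}\bPhi_{w_k}^{\H}\bW_m$ with $\tilde{\beta}_m\bR_{\mathrm{AP},m}\tr(\bR_{\mathrm{RIS},m}\bPhi_{w_k}\bq_k\bq_k^{\H}\bPhi_{w_k}^{\H})$ (a ``law of large numbers'' step that is not a genuine concentration, since the entries of $\bD_m^{\H}\bv$ are single Gaussians), whereas you compute the exact conditional fourth moment and exhibit the extra summands --- $(\tr(\bR_{\mathrm{AP},m}\bA))^{2}\big(\tr(\tilde{\bR}_{mk}^{2})+(\tr(\tilde{\bR}_{mk}))^{2}\big)$ in \eqref{exp5} and $|\tr(\bR_{\mathrm{AP},m}\bA)|^{2}\tr(\tilde{\bR}_{mk}\tilde{\bR}_{ml})$ in \eqref{exp6} --- that the stated formulas discard. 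Your route buys an exact closed form and makes the content of the ``$\approx$'' precise; what it does not supply (and the paper's LLN framing at least gestures at) is an argument for why the discarded AP-side term $(\tr(\bR_{\mathrm{AP},m}\bA))^{2}$ is negligible relative to $\tr(\bR_{\mathrm{AP},m}\bA\bR_{\mathrm{AP},m}\bA^{\H})$, so if you keep the exact-moment presentation you should add a sentence on that point. Everything else, including your use of the polarized two-matrix form of Lemma~\ref{lem0proof1} and the caution that $\tilde{\bR}_{mk}$ is non-Hermitian so only cyclicity of the trace may be used, matches the paper.
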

	\begin{proof}
		See Appendix~\ref{lem0}.	
	\end{proof}

\section{Channel Estimation}\label{ChannelEstimation}
A STAR-RIS is basically implemented by nearly passive elements without any RF chains. Thus, it  cannot process the estimated channels, i.e., it cannot obtain the received pilots by UEs, and it  cannot transmit any pilot sequences to the APs for channel estimation.  Given that perfect CSI  is unavailable and based  on the TDD protocol, we resort to a channel estimation method obtaining the estimation of the cascaded channel by an uplink training phase with pilot symbols \cite{Bjoernson2017}, while being indifferent to the individual channels.\footnote{\textcolor{black}{There are two general approaches for channel estimation. One focuses on the estimation of the individual channels such as [42],  and the other one obtains the estimated aggregated channel [13], [28].  In particular, in this work, we employ the second approach, which has lower overhead and provides the estimated channel in closed form. This method has not been used before for STAR-RIS CF mMIMO. This approach allows to highlight one of the advantages of this work, which is making   the expression for the channel estimation "looking" the same for both types of users belonging to different areas of the RIS. However, the channel estimation   is different since   $  {\bR}_{mk} $ including the expressions corresponding to the   phases shifts is different, for users in  $ t $ and  $ r $ regions. In other words, we have managed in a smart way to introduce the standard channel estimation for  multiple-user SIMO to STAR-RIS CF mMIMO, which has not taken place before. Also, the treatment of imperfect CSI via LMMSE seems to be standard but it is not since it concerns the aggregated channel vector.  Note that other papers have not employed this efficient way for channel estimation on STAR-RIS CF mMIMO.}}

\subsection{LMMSE Estimation}
We denote as $ \mathcal{P}_{k} $ the set of indices of UEs that share the same pilot sequence as UE $ k $. All UEs  in  $ t $ or $ r $ region that belong to $ \mathcal{P}_{k} $ share the same orthogonal   pilot sequences as UE $ k $. Specifically, we denote by $\bx_{k}=[x_{k,1}, \ldots, x_{k,\tau}]^{\T}\in \mathbb{C}^{\tau\times 1} $ the pilot sequence of UE $ k  $ that can be found in any of the two regions, where the duration of the uplink training phase is $ \tau $ symbols. Note that the pilot sequences are mutually orthogonal, i.e.,  $ \bx_{k}^{\H}\bx_{l}=0$, if  $l~\not\in \mathcal{P}_{k}$ and $ \bx_{k}^{\H}\bx_{l}= 1$, if  $l~ \in \mathcal{P}_{k}$.\footnote{\textcolor{black}{ Ideally, pilot sequences assigned to all users should be pairwisely orthogonal. This requires $ \tau >= K $. In many practical scenarios, the number of users is large or/and the coherence interval is short (i.e. in high mobility environments), the above condition cannot be fulfilled. Thus, orthogonal pilot sequences should be re-used among the users. This causes pilot contamination effect which may reduce the system performance significantly. To reduce the effect of pilot contamination, many pilot assignment/grouping schemes have been proposed in cell-free massive MIMO. For examples, in \cite{Ngo2017}, greedy pilot assignment is proposed, while \cite{Buzzi2020,Liu2019,Zeng2021}  proposed to new pilot assignment schemes based on Hungarian, Tabu-search, and weighted graphics algorithms, respectively. The pilot assignment for our considered systems is out scope of our work. It requires a comprehensive study and is left for future work.} } Moreover, all UEs use the same normalized signal-to-noise ratio (SNR) $ p $ for transmitting each pilot symbol during the training phase.

Let us assume UE $ k  $ transmitting the pilot sequence $ \sqrt{p\tau}\bx_{k} $. Then,  the received signal by the $ m $th AP is written as

\begin{align}
	\bY_{\mathrm{p},m} =\sum_{k=1}^{K}\sqrt{p \tau}\bW_{m}^{\H}\bPhi_{w_{k}} \bq_{k}\bx_{k}^{\H}+\bZ_{\mathrm{p},m},\label{estimated1}
\end{align}
where  $ \bZ_{\mathrm{p},m}$ is an $ L\times \tau $ noise matrix whose elements are independent and identically distributed
(i.i.d.)  $\mathcal{CN}\left(0,1\right)$ random variables. Next, the received signal in \eqref{estimated1} is projected to $ \bx_{k} $ to estimate the desired channel. Specifically, we obtain 
\begin{align}
	\by_{\mathrm{p},mk} &=\bY_{\mathrm{p},m}\bx_{k}\nn\\
	&=\sqrt{p \tau}\bW_{m}^{\H}\bPhi_{w_{k}} \bq_{k}\nn\\
	&+\sum_{l\in \mathcal{P}_{k} \backslash \{k\}}\sqrt{p \tau}\bW_{m}^{\H}\bPhi_{w_{l}} \bq_{l}+\bz_{\mathrm{p},mk},\label{up_rec}
\end{align}
where $ \bz_{\mathrm{p},mk} =\bZ_{\mathrm{p},m}\bx_{k}$ includes  i.i.d. $  \mathcal{CN}\left(0,1\right)$ random variables.

Having fixed the phase shifts, we apply the LMMSE method for estimating  $ \bh_{mk} $ at the AP even in the presence of STAR-RIS, where the cascaded channel is given by the product of  weighted complex Gaussian random variables.  Despite the complicate form of the cascaded channel, the following lemma provides a closed-form expression of the estimated channel.

\begin{lemma}\label{PropositionDirectChannel}
	The LMMSE estimate of the cascaded channel \textcolor{black}{$ \bh_{mk} $} between the $ m $th AP and  UE $ k $  is given by
	\begin{align}
		\hat{\bh}_{mk}=\sqrt{p \tau}\bR_{mk}\bQ_{m} \by_{\mathrm{p},mk},\label{estim1}
	\end{align}
	where $ \bQ_{m}\!=\! \left({p \tau}\sum_{l\in \mathcal{P}_{k} }\bR_{ml}+\Id_{M}\right)^{\!-1}$, and $ \by_{\mathrm{p},mk}$ is the noisy channel given by \eqref{up_rec}. Hence, the estimated channel has zero mean and variance given by
	\begin{align}
		\bPsi_{mk}=p \tau \bR_{mk}\bQ_{k}\bR_{mk}.\label{estimatedVariance}
	\end{align}
	The estimated channel 	$ \hat{\bh}_{mk} $ is uncorrelated with the channel estimation error $ \bee_{mk}=\bh_{mk}-\hat{\bh}_{mk}  $, which has zero mean and variance $ \EE \{|\bee_{mk}|^{2}\} $ given by $ \bR_{mk}- \bPsi_{mk}$.
\end{lemma}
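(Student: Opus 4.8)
The plan is to apply the classical linear MMSE estimator together with its orthogonality principle; both require only the first two moments of the observation $\by_{\mathrm{p},mk}$ in \eqref{up_rec} and of its cross-correlation with the target $\bh_{mk}$, so the non-Gaussianity of the cascaded channel is harmless. First I would note that $\bh_{mk}=\bW_{m}^{\H}\bPhi_{w_{k}}\bq_{k}$ has zero mean, being a product of the mutually independent, zero-mean Gaussian factors $\bW_{m}$ and $\bq_{k}$; hence $\by_{\mathrm{p},mk}$, the estimate $\hat{\bh}_{mk}$, and the error $\bee_{mk}$ are all zero-mean, and no centering is needed.

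Next I would compute the required second-order statistics. For any $l\neq k$ — in particular for the co-pilot UEs $l\in\mathcal{P}_{k}\setminus\{k\}$ — conditioning on $\bW_{m}$ and using $\EE\{\bq_{k}\bq_{l}^{\H}\}=\b0$ gives $\EE\{\bh_{mk}\bh_{ml}^{\H}\}=\EE_{\bW_{m}}\{\bW_{m}^{\H}\bPhi_{w_{k}}\EE\{\bq_{k}\bq_{l}^{\H}\}\bPhi_{w_{l}}^{\H}\bW_{m}\}=\b0$, so distinct channels seen at AP $m$ are uncorrelated. The projected noise $\bz_{\mathrm{p},mk}=\bZ_{\mathrm{p},m}\bx_{k}$ is independent of all channels, zero-mean, with covariance $\|\bx_{k}\|^{2}\Id_{L}=\Id_{L}$. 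Combining these facts with \eqref{cov1},
\begin{align}
	\EE\{\bh_{mk}\by_{\mathrm{p},mk}^{\H}\}&=\sqrt{p\tau}\,\bR_{mk},\nonumber\\
	\EE\{\by_{\mathrm{p},mk}\by_{\mathrm{p},mk}^{\H}\}&=p\tau\sum_{l\in\mathcal{P}_{k}}\bR_{ml}+\Id_{L}=\bQ_{m}^{-1}.\nonumber
\end{align}
Substituting into $\hat{\bh}_{mk}=\EE\{\bh_{mk}\by_{\mathrm{p},mk}^{\H}\}\big(\EE\{\by_{\mathrm{p},mk}\by_{\mathrm{p},mk}^{\H}\}\big)^{-1}\by_{\mathrm{p},mk}$ gives \eqref{estim1}.

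For the covariance of the estimate I would write $\bPsi_{mk}=\EE\{\hat{\bh}_{mk}\hat{\bh}_{mk}^{\H}\}=p\tau\,\bR_{mk}\bQ_{m}\,\EE\{\by_{\mathrm{p},mk}\by_{\mathrm{p},mk}^{\H}\}\,\bQ_{m}\bR_{mk}=p\tau\,\bR_{mk}\bQ_{m}\bR_{mk}$, using $\bQ_{m}\bQ_{m}^{-1}\bQ_{m}=\bQ_{m}$, which is \eqref{estimatedVariance}. Finally, the LMMSE orthogonality principle gives $\EE\{\bee_{mk}\by_{\mathrm{p},mk}^{\H}\}=\b0$, hence $\EE\{\bee_{mk}\hat{\bh}_{mk}^{\H}\}=\b0$ since $\hat{\bh}_{mk}$ is linear in $\by_{\mathrm{p},mk}$; writing $\bh_{mk}=\hat{\bh}_{mk}+\bee_{mk}$ as an orthogonal decomposition then yields $\bR_{mk}=\bPsi_{mk}+\EE\{\bee_{mk}\bee_{mk}^{\H}\}$, i.e.\ the error covariance equals $\bR_{mk}-\bPsi_{mk}$, with zero mean inherited from $\bh_{mk}$ and $\hat{\bh}_{mk}$.

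There is no deep obstacle here: the only point worth stating carefully is that $\bh_{mk}$, being a product of Gaussians, is not Gaussian, so $\hat{\bh}_{mk}$ is the best \emph{linear} estimator rather than a conditional mean; the projection formula and the uncorrelatedness of the error with the observation still hold for any square-integrable vectors, so only the second moments matter, and these are all furnished by \eqref{cov1} and the independence structure. The remaining work is the bookkeeping of which cross-terms vanish and the identification of the $\mathcal{P}_{k}$-sum and the unit noise covariance inside $\bQ_{m}^{-1}$.
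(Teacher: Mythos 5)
Your proof is correct and follows essentially the same route as the paper's Appendix B: compute the cross-correlation $\sqrt{p\tau}\,\bR_{mk}$ and the observation covariance $\bQ_m^{-1}$, plug them into the standard LMMSE formula, and invoke the orthogonality principle for the error statistics. If anything, your version is slightly more careful — you justify why co-pilot channels are uncorrelated, correctly identify the noise covariance as $\Id_L$ (the paper writes $\Id_M$), and spell out the error-covariance decomposition that the paper leaves implicit.
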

\begin{proof}
	See Appendix~\ref{lem1}.	
\end{proof}

With the PB fixed, according to Lemma~\ref{PropositionDirectChannel}, it is shown that the cascaded channel can be estimated without any further increase of the pilot training overhead since only $ \tau \ge K $ symbols are required in each coherence interval as in standard CF mMIMO systems. The closed-form expression of the channel estimate will be employed below to derive the  NMSE of the channel estimate of  UE $ k $ at  AP $  m $ and optimize the PB by minimizing the NMSE.

Given that channel estimation is crucial in CF mMIMO systems, herein, we focus on the design of the PB to increase the quality of the channel estimation. Specifically, we define
\begin{align}
	\mathrm{NMSE}_{mk}&=\frac{\tr(\EE[(\hat{\bh}_{mk}-{\bh}_{mk})(\hat{\bh}_{mk}-{\bh}_{mk})^{\H}])}{\tr\left(\EE[{\bh}_{mk}{\bh}_{mk}^{\H}]\right)}\\
	&=1-\frac{\tr(\bPsi_{mk})}{\tr(\bR_{mk})}.\label{nmse1}
\end{align}

The NMSE, lying in the range $ [0,1] $, is suitable for measuring the channel estimation quality per AP. If orthogonal pilot signals are employed for every UE and the pilot power tends to infinity, the NMSE tends to zero. \textcolor{black}{Also, if independent Rayleigh fading is assumed, the correlation matrices in \eqref{nmse1} do no depend on the phase shifts. Hence, the $ 	\mathrm{NMSE}_{mk}$ cannot be optimized.}

\subsection{PB Optimization}

It is essential  to optimize the  performance  of STAR-RIS-assisted systems with respect to the PB, which depends on both amplitudes and phase shifts. As in \cite{VanChien2022}, we propose to optimize the PB by minimizing the NMSE from all UEs and APs.
In this direction and based on 
infinite-resolution phase shifters, we formulate the optimization problem for minimizing the  $ 	\mathrm{NMSE}_{mk} $ of  CF mMIMO STAR-RIS-assisted systems accounting for correlated fading and imperfect CSI as	
\begin{equation}
	\begin{IEEEeqnarraybox}[][c]{rl}
		\min_{\thetv,\betv}&\quad\mathrm{NMSE}=\sum_{m=1}^{M}\sum_{k=1}^{K}	\mathrm{NMSE}_{mk}\\
		\mathrm{s.t}&\quad (\beta_{n}^{t})^{2}+(\beta_{n}^{r})^{2}=1,  \forall n \in \mathcal{N}\\
		&\quad\beta_{n}^{t}\ge 0, \beta_{n}^{r}\ge 0,~\forall n \in \mathcal{N}\\
		&\quad|\theta_{n}^{t}|=|\theta_{n}^{r}|=1, ~\forall n \in \mathcal{N}
	\end{IEEEeqnarraybox}\label{Maximization}\tag{$\mathcal{P}1$}
\end{equation}
where $\thetv=[(\thetv^{t})^{\T}, (\thetv^{r})^{\T}]^{\T} $ and $\betv=[(\betv^{t})^{\T}, (\betv^{r})^{\T}]^{\T} $.

We would like to emphasize that contrary to \cite{VanChien2022}, we consider multiple antennas per AP, and we are going to optimize not only the phase shifts but also the amplitudes. Of course, we differentiate from \cite{VanChien2022} because we refer to STAR-RIS instead of conventional RIS. The optimization approach and the result is different. Hence, the equal phase shift design, demonstrated in \cite{VanChien2022}, does not take place here, and we rely on the projected  gradient ascent method (PGAM) to obtain a locally-optimal solution of \eqref{Maximization}.  \textcolor{black}{The proposed optimization method of the phase
	shifts of the RIS, i.e., the minimization of the sum NMSE  focuses on  improving the channel
	estimation quality, which is a critical objective in mMIMO systems. In particular, an improvement in the accuracy of channel estimation leads to a significant  enhancement of  the downlink net throughput \cite{VanChien2022}. Note that an alternative solution would be to optimize the PB based on the maximization of the downlink SE provided below. However, this approach is challenging,  and page limitations have led us to defer the optimization of the SE in future work.}

\begin{remark}
	Under independent Rayleigh fading conditions,  $	\mathrm{NMSE}_{mk} $ is independent of  $ \thetv $. Hence, its optimization is performed in terms of $ \betv $.
\end{remark}

The  problem \eqref{Maximization} is non-convex while the amplitudes and the phase shifts for transmission and reflection are coupled. For the sake of exposition, the feasible set of \eqref{Maximization} is defined by the sets $ \Theta=\{\thetv\ |\ |\theta_{i}^{t}|=|\theta_{i}^{r}|=1,i=1,2,\ldots N\} $, and $ \mathcal{B}=\{\betv\ |\ (\beta_{i}^{t})^{2}+(\beta_{i}^{r})^{2}=1,\beta_{i}^{t}\geq0,\beta_{i}^{r}\geq0,i=1,2,\ldots N\} $. Since the sets $\Theta$ and $ \mathcal{B}$  the projection operators can be obtained in closed-form, we  apply the  PGAM \cite[Ch. 2]{Bertsekas1999} to optimize $\thetv$ and $\betv$. The proposed PGAM, which increases the objective from the current iterate $(\thetv^{n},\betv^{n})$   towards the gradient direction,  consists of the following iterations
\begin{subequations}\label{mainiteration}\begin{align}
		\thetv^{n+1}&=P_{\Theta}(\thetv^{n}+\mu_{n}\nabla_{\thetv}	\mathrm{NMSE}_{mk}(\thetv^{n},\betv^{n})),\label{step1} \\ \betv^{n+1}&=P_{\mathcal{B}}(\betv^{n}+{\mu}_{n}\nabla_{\betv}	\mathrm{NMSE}_{mk}(\thetv^{n},\betv^{n})),\label{step2} \end{align}
\end{subequations}
where the superscript expresses the iteration count, $\mu_n$ is the step size for both $\thetv$ and $\betv$ while  $P_{\Theta}(\cdot) $ and $ P_{\mathcal{B}}(\cdot) $ are the projections onto $ \Theta $ and $ \mathcal{B} $, respectively. 

Although  the ideal step size should be
inversely proportional to the Lipschitz constant of the corresponding gradient, it is difficult to find it  for the  problem above. Fortunately,   Armijo-Goldstein backtracking line search  allows finding the step size at each iteration. For this reason, we define a quadratic approximation of $	\mathrm{NMSE}_{mk}(\thetv,\betv)$ as
\begin{align}
	&	Q_{\mu}(\thetv, \betv;\bx,\by)=	\mathrm{NMSE}_{mk}(\thetv,\betv)\nn\\
	&+\langle	\nabla_{\thetv}	\mathrm{NMSE}_{mk}(\thetv,\betv),\bx-\thetv\rangle-\frac{1}{\mu}\|\bx-\thetv\|^{2}_{2}\nn\\
	&+\langle\nabla_{\betv}	\mathrm{NMSE}_{mk}(\thetv,\betv),\by-\betv\rangle-\frac{1}{\mu}\|\by-\betv\|^{2}_{2}.
\end{align}

The step size $  \mu_{n} $ in \eqref{mainiteration} can be obtained as $ \mu_{n} = L_{n}\kappa^{u_{n}} $, where we assume that $ L_n>0 $,  $ \kappa \in (0,1) $, and  $ u_{n} $ is the
smallest nonnegative integer satisfying
\begin{align}
	\mathrm{NMSE}_{mk}(\thetv^{n+1},\betv^{n+1})\geq	Q_{L_{n}\kappa^{u_{n}}}(\thetv^{n}, \betv^{n};\thetv^{n+1},\betv^{n+1}),
\end{align}
which can be performed by an iterative procedure. Note that the step size at iteration $n$ is used as the initial step size at iteration $n+1$. The proposed PGAM is summarized in Algorithm \ref{Algoa1}. 
\begin{algorithm}[th]
	\caption{Projected Gradient Ascent Algorithm for the RIS Design\label{Algoa1}}
	\begin{algorithmic}[1]
		\STATE Input: $\thetv^{0},\betv^{0},\mu_{1}>0$, $\kappa\in(0,1)$
		\STATE $n\gets1$
		\REPEAT
		\REPEAT \label{ls:start}
		\STATE $\thetv^{n+1}=P_{\Theta}(\thetv^{n}+\mu_{n}\nabla_{\thetv}	\mathrm{NMSE}_{mk}(\thetv^{n},\betv^{n}))$
		\STATE $\betv^{n+1}=P_{B}(\betv^{n}+\mu_{n}\nabla_{\betv}	\mathrm{NMSE}_{mk}(\thetv^{n},\betv^{n}))$
		\IF{ $	\mathrm{NMSE}_{mk}(\thetv^{n+1},\betv^{n+1})\leq Q_{\mu_{n}}(\thetv^{n},\betv^{n};\thetv^{n+1},\betv^{n+1})$}
		\STATE $\mu_{n}=\mu_{n}\kappa$
		\ENDIF
		\UNTIL{ $	\mathrm{NMSE}_{mk}(\thetv^{n+1},\betv^{n+1})>Q_{\mu_{n}}(\thetv^{n},\betv^{n};\thetv^{n+1},\betv^{n+1})$}\label{ls:end}
		\STATE $\mu_{n+1}\leftarrow\mu_{n}$
		\STATE $n\leftarrow n+1$
		\UNTIL{ convergence}
		\STATE Output: $\thetv^{n+1},\betv^{n+1}$
	\end{algorithmic}
\end{algorithm}

\begin{proposition}\label{LemmaGradients}
	The complex gradients $ \nabla_{\thetv}		\mathrm{NMSE}_{mk}(\thetv,\betv) $ and  $\nabla_{\betv}	\mathrm{NMSE}_{mk}(\thetv,\betv) $ are given in closed-forms by \eqref{nablath1}, \eqref{nablath2},
	\begin{figure*}
		\begin{align}
			\nabla_{\thetv}	\mathrm{NMSE}_{mk}(\thetv,\betv) &=[\nabla_{\thetv^{t}}	\mathrm{NMSE}_{mk}(\thetv,\betv)^{\T}, \nabla_{\thetv^{r}}	\mathrm{NMSE}_{mk}(\thetv,\betv)^{\T}]^{\T},\label{nablath1}\\
			\nabla_{\thetv^{i}}\mathrm{NMSE}_{mk}(\thetv,\betv)&=\frac{\tr(\bPsi_{mk})\nabla_{\boldsymbol{\theta}^{i}}\tr(\bR_{mk})-\tr(\bR_{mk})\nabla_{\boldsymbol{\theta}^{i}}\tr(\bPsi_{mk})}{\tr^{2}(\bR_{mk})} , i=t,r. \label{nablath2}
		\end{align}
		\hrulefill
	\end{figure*}
	where
	\begin{align}
		\nabla_{\thetv^{i}}\tr(\bR_{mk})
		&=\tilde{\beta}_{mk}\tr(\mathbf{R}_{\mathrm{AP}})\diag\bigl(\mathbf{A}_{i}\diag(\boldsymbol{{\beta}}^{i})\bigr)\label{derivtheta},\\
		\nabla_{\thetv^{i}}\tr(\bPsi_{mk})&=	\nu_{mk}\diag\bigl(\mathbf{A}_{i}\diag(\boldsymbol{{\beta}}^{i})\bigr)
	\end{align}
	with
	\begin{equation}
		\nu_{mk}\!=\!\hat{\beta}_{mk}\!\tr\bigl(\!\bigl(\mathbf{Q}_{k}\mathbf{R}_{mk}+\mathbf{R}_{mk}\mathbf{Q}_{k}-p \tau \sum_{l\in \mathcal{P}_{k} }\mathbf{Q}_{k}\mathbf{R}_{mk}^{2}\mathbf{Q}_{k}\bigr)\mathbf{R}_{\mathrm{AP}}\bigr).
	\end{equation}

	Similarly, the  gradient $\nabla_{\betv}\mathrm{NMSE}_{mk}(\thetv,\betv) $ is given by \eqref{nablab1}, \eqref{nablab2},
	\begin{figure*}
		\begin{align}
			\nabla_{\betv}	\mathrm{NMSE}_{mk}(\thetv,\betv) &=[\nabla_{\betv^{t}}	\mathrm{NMSE}_{mk}(\thetv,\betv)^{\T}, \nabla_{\betv^{r}}	\mathrm{NMSE}_{mk}(\thetv,\betv)^{\T}]^{\T},\label{nablab1}\\
			\nabla_{\betv^{i}}\mathrm{NMSE}_{mk}(\thetv,\betv)&=\frac{\tr(\bPsi_{mk})\nabla_{\boldsymbol{\beta}^{i}}\tr(\bR_{mk})-\tr(\bR_{mk})\nabla_{\boldsymbol{\beta}^{i}}\tr(\bPsi_{mk})}{\tr^{2}(\bR_{mk})} , i=t,r. \label{nablab2}
		\end{align}
		\hrulefill
	\end{figure*}
	where
	\begin{align}
		\!\!	\nabla_{\betv^{i}}\tr(\bR_{mk})		&\!=\!2\tilde{\beta}_{mk}\tr(\mathbf{R}_{\mathrm{AP}})\Re\bigl\{\diag\bigl(\mathbf{A}_{i}\herm\diag(\btheta^{i})\bigr\},\\
		\!\!	\nabla_{\betv^{i}}\tr(\bPsi_{mk})&\!=\!2	\nu_{mk}\diag\bigl(\mathbf{A}_{i}\Re\bigl\{\diag\bigl(\mathbf{A}_{i}\herm\diag(\btheta^{i})\bigr\}.
	\end{align}
\end{proposition}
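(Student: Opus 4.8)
The plan is to differentiate the closed form $\mathrm{NMSE}_{mk}=1-\tr(\bPsi_{mk})/\tr(\bR_{mk})$ from \eqref{nmse1} and reduce the whole computation to four primitive trace derivatives. For each of the four blocks $\thetv^{t},\thetv^{r},\betv^{t},\betv^{r}$, the quotient rule for Wirtinger gradients gives $\nabla\mathrm{NMSE}_{mk}=\bigl(\tr(\bPsi_{mk})\nabla\tr(\bR_{mk})-\tr(\bR_{mk})\nabla\tr(\bPsi_{mk})\bigr)/\tr^{2}(\bR_{mk})$, which is exactly \eqref{nablath2}/\eqref{nablab2}, and stacking the $t$- and $r$-blocks yields \eqref{nablath1}/\eqref{nablab1}. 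So it remains to compute $\nabla_{\thetv^{i}}\tr(\bR_{mk})$, $\nabla_{\betv^{i}}\tr(\bR_{mk})$, $\nabla_{\thetv^{i}}\tr(\bPsi_{mk})$ and $\nabla_{\betv^{i}}\tr(\bPsi_{mk})$ for $i\in\{t,r\}$. The key structural fact, from \eqref{cov1} and Lemma~\ref{PropositionDirectChannel}, is that $\bR_{mk}=c_{mk}\bR_{\mathrm{AP},m}$ with the scalar $c_{mk}\triangleq\tr(\tilde{\bR}_{mk})$ carrying all the surface dependence, and $\bPsi_{mk}=p\tau\bR_{mk}\bQ_{k}\bR_{mk}$ with $\bQ_{k}=\bigl(p\tau\sum_{l\in\mathcal{P}_{k}}c_{ml}\bR_{\mathrm{AP},m}+\Id\bigr)^{-1}$ a rational function of the single fixed matrix $\bR_{\mathrm{AP},m}$; hence $\tr(\bR_{mk})=\tr(\bR_{\mathrm{AP},m})c_{mk}$ and $\tr(\bPsi_{mk})$ are ordinary functions of $\bR_{\mathrm{AP},m}$ and the scalars $\{c_{ml}\}_{l\in\mathcal{P}_{k}}$ alone.

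For the $\tr(\bR_{mk})$ gradients I would write $\bPhi_{w_{k}}=\diag(\boldsymbol{\beta}^{i}\!\circ\boldsymbol{\theta}^{i})$, $i=w_{k}$, and turn $c_{mk}=\tilde{\beta}_{mk}\tr(\bR_{\mathrm{RIS},m}\bPhi_{w_{k}}\bR_{\mathrm{RIS},k}\bPhi_{w_{k}}^{\H})$ into a Hermitian quadratic form via the Hadamard-product trace identity $\tr(\bB_{1}\diag(\bv)\bB_{2}\diag(\bv)^{\H})=\bv^{\H}(\bB_{1}\!\circ\bB_{2}^{\T})\bv$, i.e. $c_{mk}=\tilde{\beta}_{mk}(\boldsymbol{\beta}^{i}\!\circ\boldsymbol{\theta}^{i})^{\H}(\bR_{\mathrm{RIS},m}\!\circ\bR_{\mathrm{RIS},k}^{\T})(\boldsymbol{\beta}^{i}\!\circ\boldsymbol{\theta}^{i})$, whose coefficient matrix is (up to the diagonal rescaling by $\boldsymbol{\beta}^{i}$) the matrix $\bA_{i}$ of the statement. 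Differentiating this quadratic form in the Wirtinger sense, only one conjugate copy survives, giving the direction $\diag(\boldsymbol{\beta}^{i})(\bR_{\mathrm{RIS},m}\!\circ\bR_{\mathrm{RIS},k}^{\T})\diag(\boldsymbol{\beta}^{i})\boldsymbol{\theta}^{i}$, which the diagonal structure of $\bPhi_{w_{k}}$ lets me repackage as $\diag(\bA_{i}\diag(\boldsymbol{\beta}^{i}))$; multiplying by $\tilde{\beta}_{mk}\tr(\bR_{\mathrm{AP},m})$ gives \eqref{derivtheta}. Because $\boldsymbol{\beta}^{i}$ is real, differentiating instead in $\boldsymbol{\beta}^{i}$ produces a copy and its conjugate, hence the $2\Re\{\cdot\}$ appearing in the corresponding $\betv^{i}$-gradient.

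For $\tr(\bPsi_{mk})=p\tau\tr(\bR_{mk}\bQ_{k}\bR_{mk})$ I would use differentials: the two explicit $\bR_{mk}$ factors contribute $d\bR_{mk}=\bR_{\mathrm{AP},m}\,dc_{mk}$, while the implicit dependence of $\bQ_{k}$ is handled by $d\bQ_{k}=-\bQ_{k}(d\bQ_{k}^{-1})\bQ_{k}=-p\tau\bQ_{k}\bR_{\mathrm{AP},m}\bQ_{k}\sum_{l\in\mathcal{P}_{k}}dc_{ml}$. Since $\bR_{mk}$, the $\bR_{ml}$, and $\bQ_{k}$ all commute with $\bR_{\mathrm{AP},m}$, summing the three contributions and taking the trace collapses to the scalar $\nu_{mk}$ of the statement (the path-loss-weighted trace of $\bigl(\bQ_{k}\bR_{mk}+\bR_{mk}\bQ_{k}-p\tau\sum_{l\in\mathcal{P}_{k}}\bQ_{k}\bR_{mk}^{2}\bQ_{k}\bigr)\bR_{\mathrm{AP},m}$) multiplying the very same direction vector $\diag(\bA_{i}\diag(\boldsymbol{\beta}^{i}))$ as for $\tr(\bR_{mk})$ (and, in $\boldsymbol{\beta}^{i}$, with the analogous $2\Re\{\cdot\}$). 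This yields $\nabla_{\thetv^{i}}\tr(\bPsi_{mk})=\nu_{mk}\diag(\bA_{i}\diag(\boldsymbol{\beta}^{i}))$ and its $\betv^{i}$ counterpart; substituting these together with the $\tr(\bR_{mk})$ gradients into the quotient rule completes the proof.

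I expect the main obstacle to be precisely this bookkeeping for $\tr(\bPsi_{mk})$: correctly propagating the implicit dependence of $\bQ_{k}$ on the surface parameters through $d\bQ_{k}=-\bQ_{k}(d\bQ_{k}^{-1})\bQ_{k}$ while simultaneously respecting the constrained parametrization (diagonal $\bPhi_{w_{k}}$, real nonnegative $\boldsymbol{\beta}^{i}$, unit-modulus $\boldsymbol{\theta}^{i}$) and the Wirtinger convention, so that the complex/real distinction is applied consistently and the final directions collapse into the compact $\diag(\bA_{i}\diag(\cdot))$ form. As a sanity check, under independent fading $\bR_{\mathrm{RIS},m}=\bR_{\mathrm{RIS},k}=\Id_{N}$ makes $\bR_{\mathrm{RIS},m}\!\circ\bR_{\mathrm{RIS},k}^{\T}=\Id_{N}$, so $c_{mk}=\tilde{\beta}_{mk}\sum_{n}(\beta_{n}^{i})^{2}$ no longer depends on $\boldsymbol{\theta}^{i}$ and $\mathrm{NMSE}_{mk}$ becomes insensitive to $\thetv^{i}$, in agreement with the remark following \eqref{Maximization} that the optimization then reduces to $\betv$.
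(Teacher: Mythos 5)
Your proposal is correct and follows essentially the same route as the paper's proof: the quotient rule applied to $\mathrm{NMSE}_{mk}=1-\tr(\bPsi_{mk})/\tr(\bR_{mk})$, the differential of $\tr(\bR_{mk})$ exploiting the diagonal structure of $\bPhi_{w_k}$, the matrix-inverse differential $d\bQ_{k}=-\bQ_{k}\,d(\bQ_{k}^{-1})\,\bQ_{k}$ for the implicit dependence in $\bPsi_{mk}$, and the factor $2\Re\{\cdot\}$ for the real amplitudes. Your reformulation of $\tr(\tilde{\bR}_{mk})$ as the Hermitian quadratic form $\bv^{\H}(\bR_{\mathrm{RIS},m}\circ\bR_{\mathrm{RIS},k}^{\T})\bv$ is only a cosmetic repackaging of the paper's direct differential of $\tr(\bA_{i}\bPhi_{i}^{\H})$ and yields the identical direction $\diag(\bA_{i}\diag(\boldsymbol{\beta}^{i}))$.
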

\begin{proof}
	Please see Appendix~\ref{lem2}.	
\end{proof}

\subsection{Complexity Analysis of Algorithm \ref{Algoa1}}
Herein, we present the complexity analysis \textcolor{black}{for each iteration} of Algorithm \ref{Algoa1}   using the big-O notation.  First, we focus on the computation of $\bR_{mk}$. We observe that $\bR_{\mathrm{RIS},m} \bPhi_{w_{k}}$ requires $N^2$ complex multiplications because $\bPhi_{w_{k}}$ is diagonal. Thus, to compute  $\tr(\bA_{w_k} \bPhi_{w_{k}}^{\H})$, the complexity is $O(N^2+N)$. The complexity to compute $\bR_{mk}$ is $O(N^2+N+L^2)$ because $O(L^2)$ additional complexity multiplications are required to derive $\tr(\bA_{w_k} \bPhi_{w_{k}}^{\H})\bR_{\mathrm{AP},m}$. Moreover, since $\bPsi_{mk}=\bR_{mk}\bQ_{k}\bR_{mk}$ with $ \bQ_{k} $ being inverse takes $O(L^3)$ to derive it. In summary, we can conclude that the complexity \textcolor{black}{for each iteration} is $	\mathrm{NMSE}_{mk}(\thetv,\betv)$ is $O(K M(N^2+L^3))$.

	{\color{black}
	\subsubsection{Convergence Analysis of Algorithm \ref{Algoa1}}
	The guarantee of the convergence of Algorithm \ref{Algoa1} is provided by following standard arguments for projected gradient methods. First, the gradients $\nabla_{\thetv}f(\thetv,\betv)$ and $\nabla_{\betv}f(\thetv,\betv)$ are Lipschitz continuous\footnote{\color{black}A function $\bh(\bx)   $ is said to be Lipschitz continuous over the set $D$ if there exists $L>0$ such that $||\bh(\bx)-\bh(\by)  ||\leq L||\bx-\by||_2$} over the feasible set as they comprise basic functions as given above. Let $L_{\thetv }$ and $L_{\betv}$ be the Lipschitz constant of $\nabla_{\thetv}f(\thetv,\betv)$ and $\nabla_{\betv}f(\thetv,\betv)$, respectively. Next, we have that \cite[Chapter 2]{Bertsekas1999}
	\begin{align}
		f(\bx,\by) &\geq f(\thetv,\betv)
		+\langle	\nabla_{\thetv}f(\thetv,\betv),\bx-\thetv\rangle-\frac{1}{L_{\thetv }}\|\bx-\thetv\|^{2}_{2}
	\nn	\\
		&\quad\quad+\langle\nabla_{\betv}f(\thetv,\betv),\by-\betv\rangle-\frac{1}{L_{\betv}}\|\by-\betv\|^{2}_{2}\nn\\
		&\geq f(\thetv,\betv)
		+\langle	\nabla_{\thetv}f(\thetv,\betv),\bx-\thetv\rangle-\frac{1}{L_{\max }}\|\bx-\thetv\|^{2}_{2}
		\nn\\ \nn&\quad\quad+\langle\nabla_{\betv}f(\thetv,\betv),\by-\betv\rangle-\frac{1}{L_{\max}}\|\by-\betv\|^{2}_{2}		
	\end{align}
	where $L_{\max}=\max(L_{\thetv },L_{\betv})$. Hence, the line search procedure of Algorithm \ref{Algoa1} (i.e. the loop between Steps \ref{ls:start} -- \ref{ls:end}) terminates in finite iterations since the condition in Step \ref{ls:end} must be satisfied when  $\mu_n <L_{\max}$. More specifically, given $\mu_{n-1}$, the maximum number of steps in the line search procedure is $\left\lceil \frac{\log(L_{\max}\mu_{n-1})}{\log\kappa}\right\rceil $. Moreover, because of the line search we automatically have  an increasing sequence of objectives, i.e., $f(\thetv^{n+1},\betv^{n+1})\geq f(\thetv^{n},\betv^{n})$. Since the feasible sets $\Theta$  and $\mathcal{B}$ are compact, $f(\thetv^{n},\betv^{n})$ must converge. However, we highlight that Algorithm \ref{Algoa1} is only guaranteed to converge to a stationary point of \eqref{Maximization}, which is not necessarily an optimal solution due to the nonconvexity of \eqref{Maximization}. We also note that $L_{\thetv }$ and $L_{\betv}$ are not required to run Algorithm \ref{Algoa1}.}

\section{Downlink Data Transmission}\label{PerformanceAnalysis}

This section presents the downlink data transmission phase, and the study of the received signal  when both the numbers of APs and RIS elements grow to infinity. Also, we provide a closed-form expression of the achievable downlink SE with MR precoding for an arbitrary PB.

Based on TDD, we can exploit channel reciprocity, where the uplink and downlink channels are the same, and write the received signal by UE $ k $ in  $ t  $ or $ r  $ region. Specifically, we consider the cooperation among the $ M $ APs that jointly transmit the same data symbol to UE $ k $. In particular, the received signal by  UE $ k $ is described as
\begin{align}
	r_{k}=\sum_{m=1}^{M}\bh_{mk}^\H\bs_{m}+z_{k},\label{DLreceivedSignal}
\end{align}
where   $\bs_{m}=\sqrt{\rho_{\mathrm{d}}}  \sum_{i=1}^{K}\sqrt{\eta_{mi}}\bff_{i}l_{i}$ denotes  the transmit signal vector  by the $ m $th AP  with $ \rho_{\mathrm{d}} $ being  the normalized SNR in the downlink allocated to UE $k $,  and $z_{k} \sim \cC\cN(0,1)$ being the complex Gaussian noise at UE $k$. Also,  $\bff_{k} \in \bbC^{L \times 1}$ is the linear precoding vector, and   $ l_{k} $ is     the corresponding data symbol with $ \EE\{|l_{i}|^{2}\}=1 $. After accounting for MR precoding, where $ \bff_{i}= \hat{\bh}_{mi} $, \eqref{DLreceivedSignal} can be written as
\begin{align}
	r_{k}=\sqrt{\rho_{\mathrm{d}}}\sum_{m=1}^{M}  \sum_{i=1}^{K}\sqrt{\eta_{mi}}{\bh}_{mk}^\H \hat{\bh}_{mi} l_{i} +z_{k}.\label{DLreceivedSignal1}
\end{align}
Note that $ \eta_{mk} $ is a power control coefficient at AP $ m $ that satisfies the
power  constraint $\EE\{\|\bs_{m}\|^{2}\}\le \rho_{\mathrm{d}}  $, which gives
\begin{align}
	\sum_{k=1}^{K}\eta_{mk}\tr(\bPsi_{mk})\le 1.\label{constraint}
\end{align}

\subsection{Asymptotic Analysis ($ M, N \to \infty $)}
To proceed with the analysis in this case, certain assumptions, concerning the covariance matrices, should be fulfilled \cite[Assump. A1-A3]{Hoydis2013}. These assumptions basically mean that the  sum of the eigenvalues and the largest singular value of the covariance matrices are finite and positive. As can be seen in \eqref{DLreceivedSignal1}, the received signal depends on the channel estimates of all UEs. To proceed further, this equation is rewritten  in terms of the  pilot reuse set as
\begin{align}
	r_{k}&=\sqrt{\rho_{\mathrm{d}}}\sum_{i \in \mathcal{P}_{k}}\sum_{m=1}^{M}  \sqrt{\eta_{mi}}{\bh}^\H_{mk}\hat{\bh}_{mi}l_{i}\nn\\
	&+\sqrt{\rho_{\mathrm{d}}}\sum_{i\not\in\mathcal{P}_{k} }\sum_{m=1}^{M}  \sqrt{\eta_{mi}}{\bh}^\H_{mk}\hat{\bh}_{mi}l_{i} +z_{k}.\label{DLreceivedSignal2}
\end{align}

Elaborating on the first sum of \eqref{DLreceivedSignal2}, we have
\begin{align}
	&	\sum_{m=1}^{M}  \sqrt{\eta_{mi}}{\bh}^\H_{mk}\hat{\bh}_{mi}\nn\\
	&	=\sum_{m=1}^{M}  \sqrt{\eta_{mi}}{\bh}^\H_{mk} \sqrt{p \tau}\bR_{mi}\bQ_{m} \bigl(\sum_{l\in \mathcal{P}_{k} }\sqrt{p \tau}\bh_{ml}+\bz_{\mathrm{p},mi}\bigr)
	\label{received2}\\
	&	=\sum_{m=1}^{M}  \sqrt{\eta_{mi}p \tau}\bh^\H_{mk} \bR_{mi}\bQ_{m} \bh_{mk}+\sum_{l\in \mathcal{P}_{k} \not \in \{k\}}\sum_{m=1}^{M}  \sqrt{\eta_{mi}p \tau}\nn\\
	&\times{\bh}^\H_{mk} \bR_{mi}\bQ_{m}\bh_{ml}
	+\sum_{m=1}^{M}  \sqrt{\eta_{mi}}{\bh}^\H_{mk} \bR_{mi}\bQ_{m}\bz_{\mathrm{p},mi},\label{received3}
\end{align}
where \eqref{received2} is derived by substituting the channel estimate provided by \eqref{estim1}, while  \eqref{received3} is obtained after extracting the channel of UE $ k $ from the summation. When $ M, N\to \infty $, the asymptotic result is written after dividing each term by $ MN $ as
\begin{align}
	&\frac{1}{MN}	\sum_{m=1}^{M}\sqrt{\eta_{mi}p \tau}\bh^\H_{mk} \bR_{mi}\bQ_{m} \bh_{mk} \nn\\
	&=\frac{1}{MN}	\sum_{m=1}^{M}\sqrt{\eta_{mi}p \tau}\bq^\H_{k}\bPhi_{w_{k}}^{\H}\bW_{m} \bR_{mi}\bQ_{m} \bW_{m}^{\H}\bPhi_{w_{k}} \bq_{k}\nn\\
	&=\frac{1}{N}\sqrt{ \tau}\bq^\H_{k}\bPhi_{w_{k}}^{\H}\Big(\frac{1}{M}	\sum_{m=1}^{M}\sqrt{\eta_{mi}}\bW_{m} \bR_{mi}\bQ_{m} \bW_{m}^{\H}\Big)\bPhi_{w_{k}} \bq_{k}\nn\\
	&\xrightarrow[M\to \infty]{P}\sqrt{p \tau}\nn\\
	&\times\frac{1}{N}\bq^\H_{k}\bPhi_{w_{k}}^{\H}\frac{1}{M}\!\! \sum_{m=1}^{M}\!\!\sqrt{\eta_{mi}}\EE\{\bW_{m} \bR_{mi}\bQ_{m}\bW_{m}^{\H}\}\bPhi_{w_{k}} \bq_{k}\nn\\
	&=\sqrt{p \tau}\nn\\
	&\times\frac{1}{N}\bq^\H_{k}\bPhi_{w_{k}}^{\H}\frac{1}{M}\!\! \sum_{m=1}^{M}\!\!\sqrt{\eta_{mi}}\tilde{ \beta}_{m}\bR_{\mathrm{RIS},k}\tr( \bR_{mi}\bQ_{m}\bR_{\mathrm{AP},m})\bPhi_{w_{k}} \bq_{k}\nn\\
	&=\sqrt{p \tau}\frac{1}{N}\bq^\H_{k}\bA\bq_{k},\label{received4}
\end{align}
where $ \bA= \tr( \bR_{mi}\bQ_{m}\bR_{\mathrm{AP},m})\frac{1}{M}\!\!\sum_{i \in \mathcal{P}_{k}} \sum_{m=1}^{M}\!\!\sqrt{\eta_{mi}}\tilde{ \beta}_{m}\bPhi_{w_{k}}^{\H}$   $\bR_{\mathrm{RIS},k}\bPhi_{w_{k}}$ we have used Tchebyshev's theorem \cite{Cramer2004}, while the second and third terms tend to zero due to favorable propagation conditions, and because  the overall channel and the noise  are mutually independent. This result shows that, for a fixed $ N $, the channels become asymptotically  orthogonal. Thus, the small-scale fading and the additive noise cancel out. The received signal becomes 
\begin{align}
	\!\!	\frac{1}{MN}	r_{k}\!\!&\xrightarrow[M\to \infty]{P}\!\sqrt{p \tau}\frac{1}{N}\bq^\H_{k}\bA\bq_{k}\nn\\
	&\xrightarrow[M\to \infty]{P}\sqrt{p \tau}\tilde{ \beta}_{k}\frac{1}{N}\tr(\bA \bR_{\mathrm{RIS},k}),\label{received7}
\end{align}
where the contribution of the  STAR-RIS appears indirectly in \eqref{received7}. However, the pilot contamination from UEs using the same pilot sequence remains, which means that the system cannot be benefited if we add more APs.

%
%

\subsection{Finite Analysis}
By taking advantage of the hardening channel capacity bounding technique,  the downlink ergodic spectral efficiency in $ \mathrm{bps/Hz} $ can be written as
\begin{align}
	\mathrm{SE}_{k} =(1-\tau/\tau_{\mathrm{c}}) \log_{2 }(1+ \gamma_{k}),
\end{align}
where   the pre-log fraction expresses  the percentage of samples per coherence block  for downlink data transmission, and the effective uplink signal-to-interference-plus-noise ratio (SINR) $  \gamma_{k}  $ is given by
\begin{align}
	\gamma_{k}=	\frac{S_{k}}{	I_{k}},
	\label{SINR}
\end{align}
with \begin{align}
	S_{k}&=|\mathrm{DS}_{k}|^{2}\label{num}\\
	I_{k}&=\EE\{|\mathrm{BU}_{k}|^{2}\}+\!\!\sum_{i=1, i\ne k}^{K}\!\!\EE\{|\mathrm{UI}_{ik}|^{2}\}+1\label{denom}.
\end{align}

In \eqref{num}, $ \mathrm{DS}_{k} $ is the desired signal, while, in \eqref{denom}, $ \mathrm{BU}_{k} $ is the beamforming gain, and $ \mathrm{UI}_{ik} $ is the multi-UE interference. Specifically, we have
\begin{align}
	\mathrm{DS}_{k}&\!=\!\sqrt{\rho_{\mathrm{d}}}\EE\Big\{\sum_{m=1}^{M}  \sqrt{\eta_{mk}}{\bh}^\H_{mk}\hat{\bh}_{mk}\Big\},\\
	\mathrm{BU}_{k}&\!=\!\sqrt{\rho_{\mathrm{d}}}\Big(\!\sum_{m=1}^{M}  \sqrt{\eta_{mk}}{\bh}^\H_{mk}\hat{\bh}_{mk}\!-\!\EE\Big\{\!\!\sum_{m=1}^{M} \!\! \sqrt{\eta_{mk}}{\bh}^\H_{mk}\hat{\bh}_{mk}\!\Big\}\!\!\Big),\\
	\mathrm{UI}_{ik}&\!=\!\sum_{m=1}^{M}  \sqrt{\eta_{mi}}{\bh}^\H_{mk}\hat{\bh}_{mi}.
\end{align}

Proposition \ref{Proposition:DLSINR} provides a closed-form representation of \eqref{SINR}.

\begin{proposition}\label{Proposition:DLSINR}
	For a given PB $ \Phi_{w_k} $ and  MR precoding being used, the downlink achievable SINR of UE $k$  in a STAR-RIS-assisted CF mMIMO system  is given by  \eqref{SINR}, where
	\begin{align}
		S_{k}	&=\rho_{\mathrm{d}}(\sum_{m=1}^{M}  \sqrt{\eta_{mk}}\tr(\bPsi_{mk}))^{2},\label{ds3}
	\end{align}
	\begin{align}
		&I_{k} =\rho_{\mathrm{d}}\sum_{m=1}^{M} \sum_{n=1, n\ne m}^{M} \sqrt{\eta_{mk}} \sqrt{\eta_{nk}} (p \tau)^{2} \tr(\bR_{\mathrm{AP},m}\bR_{mk}\bQ_{m})\nn\\
		&\times\tr(\bR_{\mathrm{AP},m}\bQ_{n}\bR_{nk})
		\big(\tr(\bR_{mk})\tr(\bR_{nk})+\tr(\bR_{mk}\bR_{nk})\big)\nn\\
		&+(p \tau)^{2}\!\!\!\!\sum_{l\in \mathcal{P}_{k}\backslash{k} }\!\!\!\! \tr(\bR_{\mathrm{AP},m}\bR_{mk}\bQ_{m})\!\tr(\bR_{\mathrm{AP},m}\bQ_{n}\bR_{nk})	\! \tr(\tilde{\bR}_{mk}\tilde{\bR}_{nl})\nn\\
		&-\rho_{\mathrm{d}} \sqrt{\eta_{mk}} \sqrt{\eta_{nk}}\tr(\bPsi_{mk})\tr(\bPsi_{nk})\nn\\
		&+ 	 	 \rho_{\mathrm{d}}\!\!\sum_{m=1}^{M}\!\eta_{mk}(p \tau)^{2} \tr(\bR_{\mathrm{AP},m}\bR_{mk}\bQ_{m}\bR_{\mathrm{AP},m}\bQ_{m}\bR_{mk})\nn\\
		&	\times\Big(\!\!\tr\big(\tilde{\bR}_{mk}^{2}\big)+\big(\!\tr\big(\tilde{\bR}_{mk}\big)\!\big)^{2}\!\Big)\nn\\
		&+	(p \tau)^{2}\sum_{l\in \mathcal{P}_{k}\backslash{k}}\Big(\tr(\tilde{\bR}_{ml})
		\nn\\
		&		\times\tr(\bR_{\mathrm{AP},m}\bR_{mk}\bQ_{m}\bR_{\mathrm{AP},m}\bQ_{m}\bR_{mk})\tr\big(\tilde{\bR}_{mk}\big)\Big)^{2}\nn\\
		&+	p \tau \tr(\bR_{mk}^{2}\bQ_{m}^{2})\!-\!\rho_{\mathrm{d}}\!\!\sum_{m=1}^{M}\!\eta_{mk}\tr^{2}(\bPsi_{mk})\nn\\
		&+	p \tau\!\!\!\sum_{i=1, i\ne k}^{K}\!\!\!\!\bigg(\sum_{m=1}^{M}  {\eta_{mi}}\tr(\bR_{mi}\bQ_{m}^{2}\bR_{mi}\bR_{mk})\nn\\
		&+(p \tau)^{2}\left\{
		\begin{array}{ll}
			\tr(\bR_{\mathrm{AP},m}\bR_{mi}\bQ_{m}\bR_{\mathrm{AP},m}\bQ_{m}\bR_{mi})\tr(\tilde{\bR}_{ml})\tr(\tilde{\bR}_{mk})\\+	\sum_{m=1}^{M}\sum_{n\ne m}^{M}\sqrt{\eta_{mi}\eta_{ni}}\sum_{l\in \mathcal{P}_{i}}\tr(\bR_{\mathrm{AP},m}\bR_{mi}\bQ_{m})\nn\\
			\\\times	\tr(\bR_{\mathrm{AP},m}\bQ_{n}\bR_{ni})	 \tr(\tilde{\bR}_{mk}\tilde{\bR}_{nl})\!\!\bigg) &\!\!\!\!\!\!\!\!\!\!\!\!\!\!\!\!\!\!\!\!\!\!\!\!\!\!\!\!\!\!\!\!\!\!\!\!\! i\not\in \mathcal{P}_{k} \\
			\tr(\bR_{\mathrm{AP},m}\bR_{mi}\bQ_{m}\bR_{\mathrm{AP},m}\bQ_{m}\bR_{mi})\\
			\nn\\
			\times\Big(\!\!\tr\big(\tilde{\bR}_{mk}^{2}\big)+\big(\!\tr\big(\tilde{\bR}_{mk}\big)\!\big)^{2}\!\Big)\nn\\
			+
			\sum_{l\in \mathcal{P}_{k}\backslash{k}}\tr(\bR_{\mathrm{AP},m}\bR_{mi}\bQ_{m}\bR_{\mathrm{AP},m}\bQ_{m}\bR_{mi})\\
			\nn\\
			\times\tr(\tilde{\bR}_{ml})\tr(\tilde{\bR}_{mk})\nn\\
			+\tr(\bR_{\mathrm{AP},m}\bR_{mi}\bQ_{m})\tr(\bR_{\mathrm{AP},m}\bQ_{n}\bR_{ni})\\
			\nn\\
			\times\big(\tr(\bR_{mk})\tr(\bR_{nk})+\tr(\bR_{mk}\bR_{nk})\big)\nn\\
			+\sum_{l\in \mathcal{P}_{k}\backslash{k}}\tr(\bR_{\mathrm{AP},m}\bR_{mi}\bQ_{m})\\
			\nn\\
			\times\tr(\bR_{\mathrm{AP},m}\bQ_{n}\bR_{ni}) \tr(\tilde{\bR}_{mk}\tilde{\bR}_{nl})\!\!\bigg). &\!\!\!\!\!\!\!\!\!\!\!\!\!\!\!\!\!\!\!\!\!\!\!\!\!\!\!\!\!\!\!\!\!\!\! i\not\in \mathcal{P}_{k}\backslash{k}\\
		\end{array} 
		\right. .
	\end{align}
	
\end{proposition}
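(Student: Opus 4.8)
The plan is to evaluate, in closed form, the three statistics left over by the channel-hardening (use-and-then-forget) bound: the deterministic quantity $\mathrm{DS}_k=\sqrt{\rho_{\mathrm{d}}}\sum_m\sqrt{\eta_{mk}}\,\EE\{\bh_{mk}^\H\hat{\bh}_{mk}\}$, the beamforming-gain variance $\EE\{|\mathrm{BU}_k|^2\}$, and the multi-user interference powers $\EE\{|\mathrm{UI}_{ik}|^2\}$. In every case I would substitute the LMMSE estimate $\hat{\bh}_{mi}=\sqrt{p\tau}\,\bR_{mi}\bQ_m\big(\sqrt{p\tau}\sum_{l\in\mathcal{P}_i}\bh_{ml}+\bz_{\mathrm{p},mi}\big)$ from Lemma~\ref{PropositionDirectChannel}, expand, and reduce each resulting second- or fourth-order moment of the cascaded channels to the closed forms of Lemma~\ref{lem0proof} and the identity \eqref{exp1} of Lemma~\ref{lem0proof1}. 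The organizing principle is that the pairs $\{\bW_m,\bq_\bullet\}$ are independent across APs, so each double sum over AP indices $(m,n)$ splits into a diagonal part ($m=n$, requiring \eqref{exp5}, \eqref{exp6} and \eqref{exp1}) and an off-diagonal part ($m\neq n$), which factors into a product of single-AP expectations handled by \eqref{exp2}--\eqref{exp4}.

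First I would handle $\mathrm{DS}_k$. Writing $\bh_{mk}=\hat{\bh}_{mk}+\bee_{mk}$ with $\bee_{mk}$ uncorrelated with $\hat{\bh}_{mk}$ (Lemma~\ref{PropositionDirectChannel}), one gets $\EE\{\bh_{mk}^\H\hat{\bh}_{mk}\}=\EE\{\|\hat{\bh}_{mk}\|^2\}=\tr(\bPsi_{mk})$ by \eqref{estimatedVariance}, hence $S_k=|\mathrm{DS}_k|^2=\rho_{\mathrm{d}}\big(\sum_m\sqrt{\eta_{mk}}\tr(\bPsi_{mk})\big)^2$, which is \eqref{ds3}.

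Then I would turn to the interference. For $\EE\{|\mathrm{BU}_k|^2\}=\rho_{\mathrm{d}}\,\mathrm{var}\big(\sum_m\sqrt{\eta_{mk}}\bh_{mk}^\H\hat{\bh}_{mk}\big)$ I would substitute the LMMSE formula, expand the square, and evaluate term by term: the $m\neq n$ cross terms collapse by AP-independence into products of $\EE\{\bh_{mk}^\H\bR_{mk}\bQ_m\bh_{ml}\}$-type factors (nonzero only for $l=k$, giving \eqref{exp2}) and, where two such factors combine, into \eqref{exp3}--\eqref{exp4}; the subtraction of $|\EE\{\cdot\}|^2$ produces the negative $\tr(\bPsi_{mk})\tr(\bPsi_{nk})$ and $\tr^2(\bPsi_{mk})$ corrections; the $m=n$ terms yield the $\tr(\bR_{\mathrm{AP},m}\bR_{mk}\bQ_m\bR_{\mathrm{AP},m}\bQ_m\bR_{mk})\big(\tr(\tilde{\bR}_{mk}^2)+(\tr(\tilde{\bR}_{mk}))^2\big)$ piece via \eqref{exp5}, the $\sum_{l\in\mathcal{P}_k\backslash\{k\}}$ pieces via \eqref{exp6}, and the additive-noise residue $p\tau\,\tr(\bR_{mk}^2\bQ_m^2)$. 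For $\EE\{|\mathrm{UI}_{ik}|^2\}=\sum_{m,n}\sqrt{\eta_{mi}\eta_{ni}}\,\EE\{\bh_{mk}^\H\hat{\bh}_{mi}\hat{\bh}_{ni}^\H\bh_{nk}\}$ I would split on whether $i\in\mathcal{P}_k$: if $i\notin\mathcal{P}_k$, the $m\neq n$ part vanishes because $\EE\{\bh_{mk}^\H\hat{\bh}_{mi}\}=0$ (no shared pilot), leaving only the diagonal terms from \eqref{exp5}--\eqref{exp6} and the noise residue $p\tau\,\tr(\bR_{mi}\bQ_m^2\bR_{mi}\bR_{mk})$; if $i\in\mathcal{P}_k$, the shared pilot correlates $\hat{\bh}_{mi}$ with $\bh_{mk}$, and the off-diagonal $\tr(\bR_{mk}\bR_{nk})$ and $\tr(\tilde{\bR}_{mk}\tilde{\bR}_{nl})$ terms reappear exactly as in the $\mathrm{BU}_k$ computation. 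Collecting everything into $I_k=\EE\{|\mathrm{BU}_k|^2\}+\sum_{i\neq k}\EE\{|\mathrm{UI}_{ik}|^2\}+1$ gives the stated expression.

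The hard part will be the term-by-term bookkeeping of the fourth-order moments once the LMMSE estimate is inserted: because $\bh_{mk}=\bW_m^\H\bPhi_{w_k}\bq_k$ is a \emph{product} of a Gaussian matrix and a Gaussian vector --- not itself Gaussian --- the relevant moments are exactly the product-moments packaged in Lemma~\ref{lem0proof}, and each of the many cross terms generated by expanding $|\mathrm{BU}_k|^2$ and the coherent $|\mathrm{UI}_{ik}|^2$ must be matched to the correct one of \eqref{exp2}--\eqref{exp6}, all the while keeping straight which pairs of channels share the same $\bW_m$ (genuinely correlated, same AP) and which are independent (different APs). One should also remember that \eqref{exp5} is itself only approximate, so the closed-form SINR inherits that approximation, and that the variance subtractions turning $\EE\{|\sum_m(\cdot)|^2\}$ into the $\mathrm{BU}$/$\mathrm{UI}$ expressions are precisely what create the negative $\tr^2(\bPsi)$ correction terms. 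Apart from this bookkeeping, no further idea is required.
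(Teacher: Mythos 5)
Your overall route is the same as the paper's: compute $\mathrm{DS}_k$ via the orthogonality of estimate and error, then expand $\EE\{|\mathrm{BU}_k|^2\}$ and $\EE\{|\mathrm{UI}_{ik}|^2\}$ after substituting the LMMSE estimate, splitting each double AP sum into $m=n$ and $m\neq n$ parts and matching every resulting moment to \eqref{exp2}--\eqref{exp6}. The treatment of $S_k$ and of $\EE\{|\mathrm{BU}_k|^2\}$ (including the negative $\tr(\bPsi_{mk})\tr(\bPsi_{nk})$ and $\tr^{2}(\bPsi_{mk})$ corrections and the noise residue $p\tau\,\tr(\bR_{mk}^{2}\bQ_{m}^{2})$) is consistent with Appendix~D.

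There is, however, one genuine error: you assert that for $i\notin\mathcal{P}_k$ the off-diagonal ($m\neq n$) part of $\EE\{|\mathrm{UI}_{ik}|^2\}$ vanishes because $\EE\{\bh_{mk}^\H\hat{\bh}_{mi}\}=0$. The individual means are indeed zero, but the inference $\EE\{X_mY_n^{*}\}=\EE\{X_m\}\EE\{Y_n^{*}\}=0$ fails here: although $\bW_m$ and $\bW_n$ are independent across APs, the cascaded channels $\bh_{mk}=\bW_m^\H\bPhi_{w_k}\bq_k$ and $\bh_{nl}=\bW_n^\H\bPhi_{w_l}\bq_l$ share the \emph{same} RIS--UE vectors $\bq_k$ and $\bq_l$, so the two zero-mean factors are correlated. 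This is exactly what \eqref{exp4} quantifies: for $m\neq n$ and $l\neq k$,
\begin{align}
\EE\{\bh_{mk}^{\H}\bA\bh_{ml}\bh_{nl}^{\H}\bB\bh_{nk}\}=\tr(\bR_{\mathrm{AP},m}\bA)\tr(\bR_{\mathrm{AP},m}\bB)\tr(\tilde{\bR}_{mk}\tilde{\bR}_{nl})\neq 0 .\nn
\end{align}
Consequently the term
$\sum_{m}\sum_{n\neq m}\sqrt{\eta_{mi}\eta_{ni}}\sum_{l\in\mathcal{P}_i}\tr(\bR_{\mathrm{AP},m}\bR_{mi}\bQ_{m})\tr(\bR_{\mathrm{AP},m}\bQ_{n}\bR_{ni})\tr(\tilde{\bR}_{mk}\tilde{\bR}_{nl})$, which appears explicitly in the $i\notin\mathcal{P}_k$ branch of the stated $I_k$ (the paper's $\mathcal{I}_4$), would be dropped by your argument. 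For the same reason your description of the $m\neq n$ cross terms in $\mathrm{BU}_k$ as ``collapsing by AP-independence into products'' is imprecise --- the non-factorizing corrections $\tr(\tilde{\bR}_{mk}\tilde{\bR}_{nk})$ in \eqref{exp3} exist precisely because this factorization fails --- although there you do invoke \eqref{exp3}--\eqref{exp4} and so would land on the right answer. The fix is simply to apply \eqref{exp4} to the $m\neq n$, $i\notin\mathcal{P}_k$ terms as well, rather than discarding them.
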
 
\begin{proof}
	See Appendix~\ref{Proposition1}.	
\end{proof}

The achievable sum SE is obtained as 
\begin{align}
	\mathrm{SE}	=\frac{\tau_{\mathrm{c}}-\tau}{\tau_{\mathrm{c}}}\sum_{k=1}^{K}\log_{2}\left ( 1+\gamma_{k}\right)\!,\label{LowerBound}
\end{align}
where  $ \gamma_{k}$ is given by Proposition \ref{Proposition:DLSINR}.
\begin{remark}
	According to Proposition \ref{Proposition:DLSINR}, the downlink achievable SINR is given in closed-form and depends only on statistical CSI in terms of path losses and covariance matrices. We have chosen to optimize the amplitudes and the phase shifts of the STAR-RIS by minimizing the total NMSE as mentioned in Section \ref{ChannelEstimation}. The optimization of the STAR-RIS by maximizing the achievable sum SE is omitted due to limited space but will be the topic of future work.
\end{remark}

\section{Numerical Results}\label{Numerical}
In this section, we present the numerical results of the sum SE in  STAR-RIS-aided CF mMIMO systems, which include analytical results and Monte-Carlo (MC) simulations with $ 10^{3} $ independent channel realizations. 

The  setup assumes  a geographic area of size $  1.5 \times 1.5 $ $ \mathrm{km}^{2} $, where the locations of all nodes are given in terms of
$ (x, y) $ coordinates. In particular, we consider a STAR-RIS with a UPA of $ N=64 $ elements deployed on the wall of a building. The STAR-RIS aids the communication between  $ M =64$ randomly located APs antennas  with $ L=4 $ each that serve $ K_{t} = 4 $ indoor UEs and $ K_{r}=3 $ outdoor UEs. Specifically, the $xy-$coordinates of the APs are  uniformly distributed around $(x_0,~ y_0) = (0,~0)$ , while the STAR-RIS is located  at $(x_R,~ y_R)=(50,~ 10)$, all in meter units.  Also, UEs in $r$ region are located on a straight line between $(x_R-\frac{1}{2}d_1,~y_R-\frac{1}{2}d_1)$ and $(x_R+\frac{1}{2}d_1,~y_R-\frac{1}{2}d_1)$ with equal distances between each two adjacent users, and $d_1 = 20$~m in our simulations. In a similar way, UEs in the $t$ region are located between $(x_R-\frac{1}{2}d_2,~y_R+\frac{1}{2}d_2)$ and $(x_R+\frac{1}{2}d_2,~y_R+\frac{1}{2}d_2)$ with $d_2 = 1$~m. We assume that the size of each RIS element is $ d_{\mathrm{H}}\!=\!d_{\mathrm{V}}\!=\!\lambda/4 $.   Distance-based path-loss is considered in our work, such that the channel gain of a given link $j$ is $\tilde \beta_j = A d_j^{-\alpha_{j}}$, where \textcolor{black}{ $A=d_{\mathrm{H}}\times d_{\mathrm{V}}$} is the area of each reflecting element at the RIS, \textcolor{black}{$ d_j $ is the distance of the corresponding link}, and $\alpha_{j}=2.5$ is the path-loss exponent.  Note that $ d_{ \mathrm{H}}=d_{ \mathrm{V}}=\lambda/4 $ unless otherwise stated. The correlation matrices $ \bR_{\mathrm{BS}}$ and $\bR_{\mathrm{RIS},m} $, $\bR_{\mathrm{RIS},k} $ are  computed according to \cite{Hoydis2013} and \cite{Bjoernson2020}, respectively.   The carrier frequency and the system bandwidth are  $ 1.9 $ $ \mathrm{GHz} $  and  $ 20 $ $ \mathrm{MHz} $, respectively. Each 
coherence interval consists of $ \tau_{\mathrm{c}} = 200 $ symbols, which 
correspond to a coherence bandwidth equal to $ B_{\mathrm{c}}  = 200$ $ \mathrm{KHz} $ and a coherence time equal to $ T_{\mathrm{c}} = 1 $ $ ms $. We consider $ \tau = 5 $ orthonormal pilot sequences that are shared by all UEs. \textcolor{black}{ We would like to mention that we consider equal power allocation as commonly  assumed  in  mMIMO systems. Hence, we have assumed $ \eta_{mk}=(\sum_{k'=1}^{K}\tr(\bPsi_{mk'}))^{-1}, \forall m, k $ from \eqref{constraint} Optimal  power control such as max-min power control will take place in a future work.}

For the evaluation of the advantages  of RIS-assisted CF  mMIMO systems, we consider  the following scenarios  for comparison:
\begin{itemize}
	\item A conventional RIS :  This is a baseline scheme, which consists of transmitting-only or reflecting-only elements, each with $ N_{t} $ and $ N_{r} $ elements, such that $ N_{t}+N_{r} =N$. We denote it ``cRIS''.
	\item We apply an ON/OFF scheme for  channel estimation based on \cite{Mishra2019}, where the cascaded links are estimated with one  element turned on at transmission/reflection mode sequentially. We denote it ``ON/OFF scheme''.
	\item A random PB, where the phase shifts and the amplitudes are chosen based on the Uniform distribution. ``random PB''.
	\item A conventional CF mMIMO without any surface. We denote it ``cCF mMIMO''.
	\item \textcolor{black}{An active STAR-RIS as provided in \cite{Xu2023}}.
\end{itemize}

\textcolor{black}{Fig. \eqref{fig21} illustrates the total relative estimation error, i.e., the normalized mean square error (NMSE) with 
respect to the uplink SNR for different PB matrices.
We show that the results decrease without bound. Moreover,
it is shown that the error  goes with the ON/OFF scheme in \cite{Mishra2019} and the scenario of equal phase shifts. In addition, we have  provided a comparison between the ES and the MS protocols, where the former presents lower error while the gap increases with increasing SNR. The reason for this observation is that at low SNR, under our considered setup, it is more beneficial to focus on the UEs in the reflection region as they are closer to the APs. This is confirmed by the fact that, after
running the proposed algorithm, $ \beta_{n}^{r}\approx 1,  \forall n \in \mathcal{N} $.  For high SNR, instead of pouring all the power to the users in the
reflection region, some power can be directed to the users in the transmission region to improve the total NMSE. Notably, MC simulations verify the analytical results.}

\begin{figure}[!h]
	\begin{center}
		\includegraphics[width=0.8\linewidth]{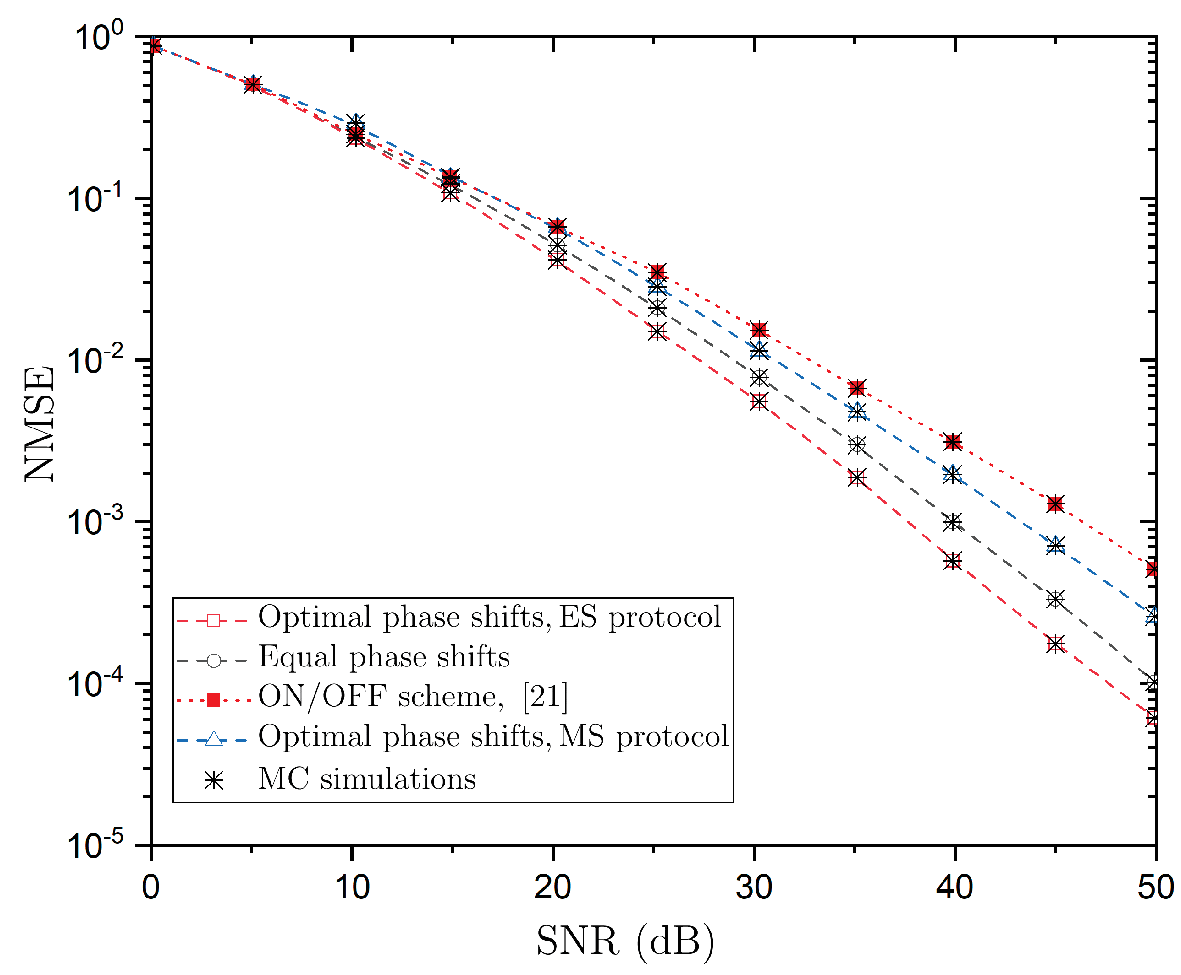}
		\caption{ \textcolor{black}{Total NMSE versus the SNR of a STAR-RIS assisted CF mMIMO system
			with imperfect CSI ($ M=64 $, $ L=4 $, $ K=4 $) in the case of uncorrelated fading 			at the RIS (Analytical results and MC simulations). }}
		\label{fig21}
	\end{center}
\end{figure}

In Fig. \ref{fig2}, we illustrate the  achievable data rate as a function of the number of STAR-RIS  elements $ N $. It can be seen that an increase in the number of surface elements brings a significant performance  to cRIS, which motivates the  deployment of a surface in CF mMIMO systems. However, the rate enhancement is smaller in the case of random PB. Moreover, we show the impact of spatial correlation at the STAR-RIS. It is shown that the performance increases as the correlation decreases. Also, the ES protocol exhibits a better performance than the MS protocol since the latter is a special case of the ES protocol. Specifically, the lines coincide for low $ N $. \textcolor{black}{Given that the ES protocol is a full-dimension transmission and reflection scheme, the gap between the ES and the MS protocols increases with increasing $ N $.} Regarding cRIS, the comparison reveals that its performance is lower since fewer degrees of freedom can be taken into advantage.  \textcolor{black}{ Notably, we have added a comparison with active STAR-RIS~\cite{Xu2023}, and have used it as a benchmark. As expected, active STAR-RIS performs better, but at the cost of higher energy consumption.}
\begin{figure}[!h]
	\begin{center}
		\includegraphics[width=0.8\linewidth]{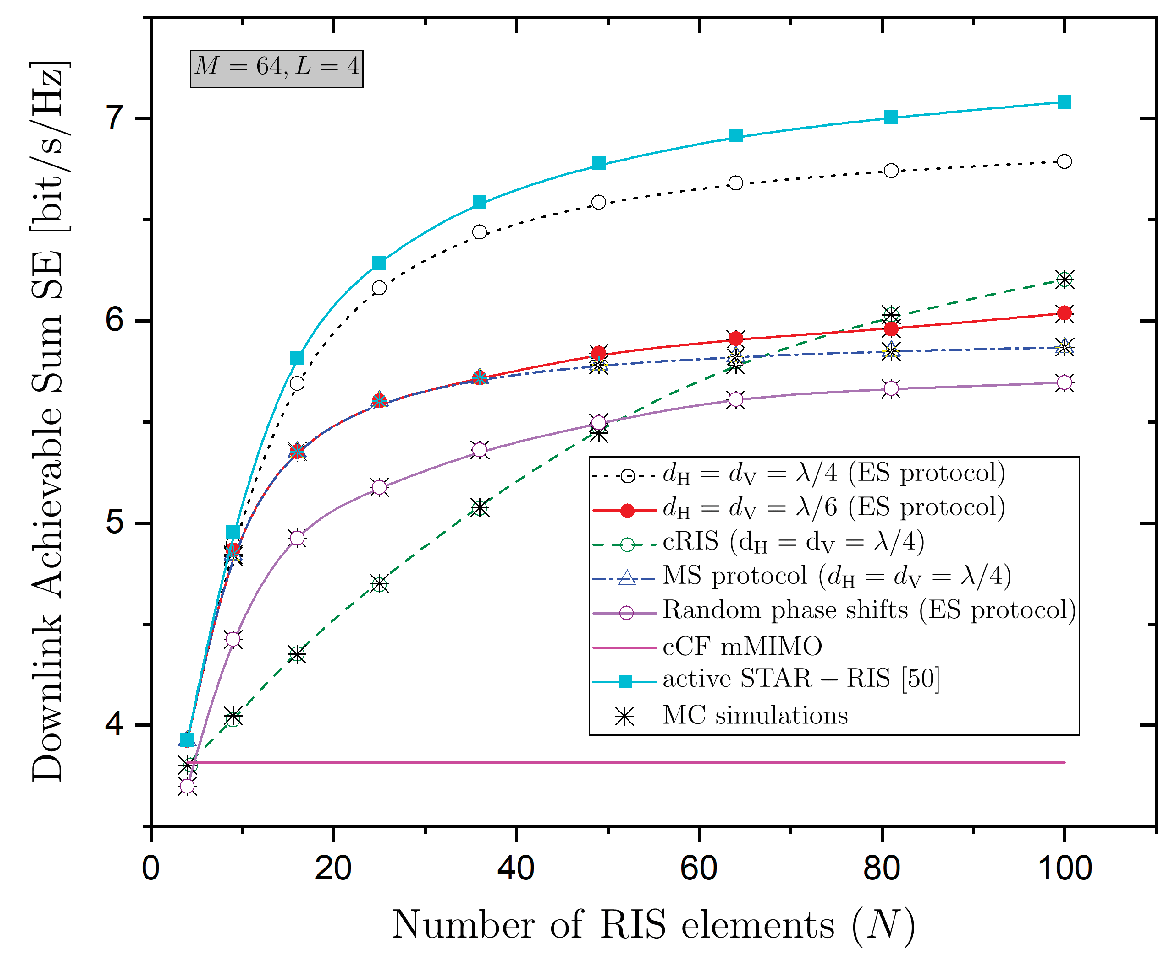}
		\caption{{Downlink achievable sum SE versus the number of RIS elements antennas $N$ of a STAR-RIS assisted MIMO system with imperfect CSI ($ M=64 $, $ L=4 $, $ K=4 $) for varying conditions (Analytical results and MC simulations). }}
		\label{fig2}
	\end{center}
\end{figure}

In Fig. \ref{fig4}, we depict the achievable rate versus the transmit SNR. We show that  even the choice  of a simple MR precoder can enhance the performance in STAR-RIS CF mMIMO systems in the low SNR regime, but as SNR increases, cCF mMIMO will perform better in terms of the rate in the case of random PB. As SNR increases further CF mMIMO can perform better even from the optimal STAR-RIS CF mMIMO system. The reason is the additional multi-user interference  coming from the surface. This observation indicates that the RIS is more beneficial in the low SNR regime. Also, we observe that the ES protocol performs better than the MS protocol. In particular, at low SNR,  the performances of the ES and MS protocols is nearly the same. In this region,  it is more  advantageous to focus on UEs in the reflection region as they are closer to the APs. This fact is confirmed by noticing that after running the proposed algorithm, $\beta_{n}^{r}\approx 1,\forall n\in \mathcal{N}$. As  SNR increases, the increase in the sum SE is small, if we keep focusing on the reflection region. At high SNR, the sum SE is improved since some power can be directed to the UEs in the transmission region.
\begin{figure}[!h]
	\begin{center}
		\includegraphics[width=0.8\linewidth]{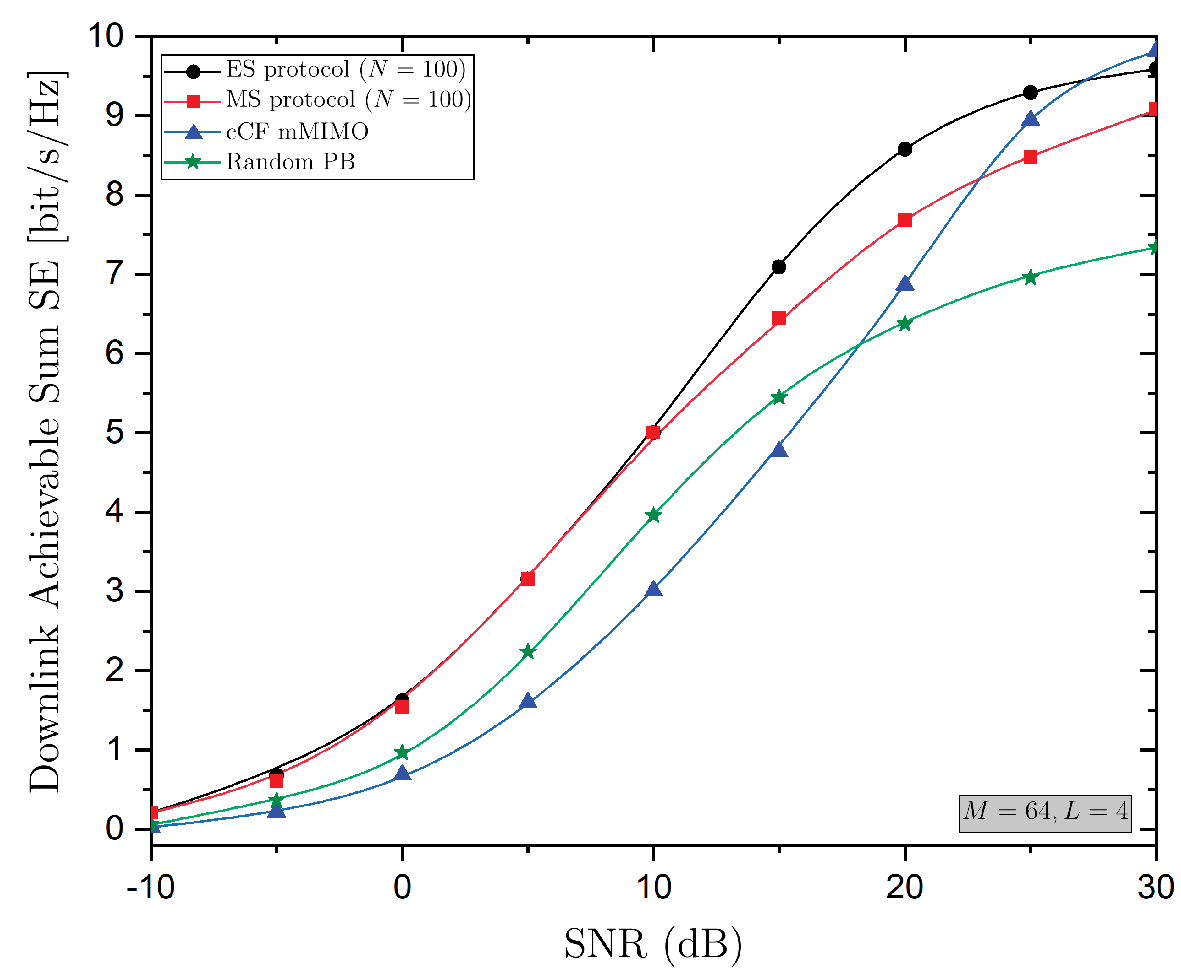}
		\caption{{Downlink achievable sum SE versus the SNR of a STAR-RIS assisted CF mMIMO system with imperfect CSI ($M=64$, $ N=64 $,  $ L=4 $, $ K=4 $) for varying conditions (Analytical results). }}
		\label{fig4}
	\end{center} 
\end{figure}

Fig. \ref{fig3} depicts the achievable rate versus the number of APs $ M $. As shown, the rate increases with $ M $ but saturates at a large $ M $ because of the multi-user interference. The STAR-RIS outperforms the cRIS because it exploits more degrees of freedom due to the simultaneous transmission and reception. Concerning the RIS correlation, a low correlation increases the performance due to increased diversity gains among the surface elements. Despite its higher complexity, the ES protocol achieves better performance than the MS protocol. Under the setting of no surface correlation, the performance is not good because of the lower capability for optimization in terms of only the amplitudes. Also, the scenario corresponding to the ON/OFF scheme presents a higher achievable rate since our case, based on statistical CSI, includes loss of information. Moreover, the scenario with random PB performs  better than the case of no RIS correlation.
\begin{figure}%
	\centering
	\includegraphics[width=0.8\linewidth]{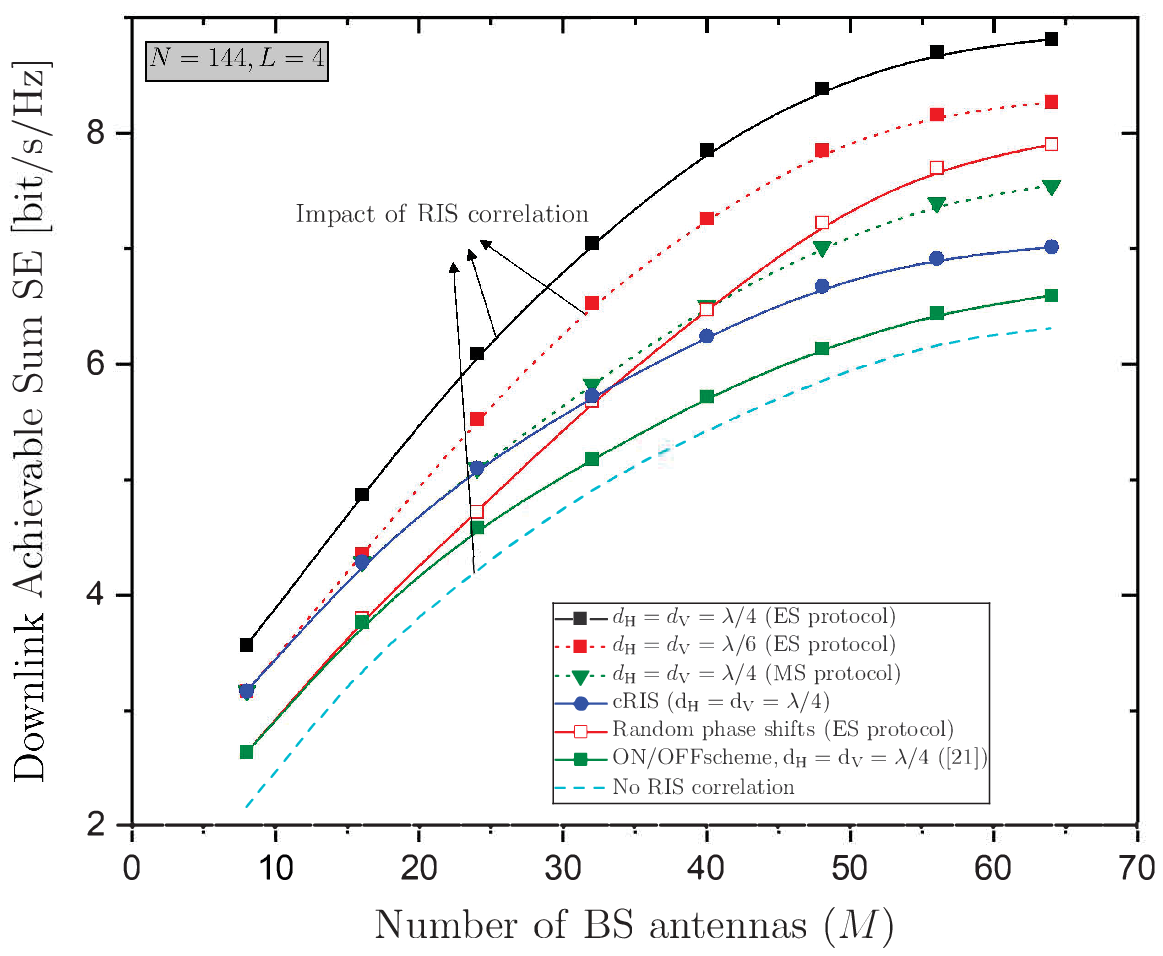}
	\caption{ Downlink achievable sum SE versus the number of BS antennas $M$ of an STAR-RIS assisted MIMO system with imperfect CSI ($ N=64 $, $ L=4 $, $ K=4 $)   under varying conditions (Analytical results). }
	\label{fig3}
\end{figure}
\section{Conclusion}\label{Conclusion}
In this paper, we studied the NMSE and the achievable rate performance in the downlink of STAR-RIS-assisted CF mMIMO systems in terms of statistical CSI. We accounted for correlated Rayleigh fading and introduced a STAR-RIS to provide additional channels to the UEs located on both sides of the surface of a standard CF mMIMO system. We have introduced an efficient channel estimation scheme  to overcome the high overhead obtained by estimating  the individual channels of the RIS elements.  We have employed MR beamforming to design the active beamforming in terms of instantaneous CSI, while the analytical expressions are in terms of statistical CSI. To this end, we have  derived the closed-form expressions of the received signal when the number of APs and surface elements increases, and we have obtained the downlink achievable rate while providing analytical insights. Finally, we have illustrated the numerical
results to demonstrate the  benefits of integrating a STAR-RIS into conventional CF mMIMO systems. Our results have shed light on the impact of spatial correlation and other fundamental parameters on a CF network.

\begin{appendices}
	\section{Proof of Lemma~\ref{lem0proof}}\label{lem0}
	The	computation of the 	expectation in \eqref{exp2} is straightforward based on the property $\bx^{\H}\by = \tr(\by \bx^{\H})$ for any vectors $\bx$, $\by$. In the case of the expectation in \eqref{exp3}, it is written as
	\begin{align}
		&\EE\{ \bh_{mk}^{\H}\bA\bh_{mk}\bh_{nk}^{\H}\bB\bh_{nk}\}
		\nn\\
		&=\EE\{ \bq_{k}^{\H}\bPhi_{w_{k}}^{\H}\bW_{m}\bA\bW_{m}^{\H}\bPhi_{w_{k}}\bq_{k} \bq_{k}^{\H}\bPhi_{w_{k}}^{\H}\bW_{m}\bB\bW_{m}^{\H}\bPhi_{w_{k}}\bq_{k}\},\label{lem1_1}
	\end{align}
	where, in \eqref{lem1_1}, we have substituted the cascaded channel. Note that we have
	\begin{align}
		\bW_{m}\bA\bW_{m}^{\H}&=\EE\{ \tilde{\beta}_{m}\bR_{\mathrm{RIS},m}^{1/2}\bD_{m}\bR_{\mathrm{AP},m}^{1/2}\bA\bR_{\mathrm{AP},m}^{1/2}\bD_{m}^{\H}\bR_{\mathrm{RIS},m}^{1/2}\}\nn\\
		&= \tilde{\beta}_{m}\bR_{\mathrm{RIS},m}^{1/2}\tr(\bR_{\mathrm{AP},m}\bA)\bR_{\mathrm{RIS},m}^{1/2}\nn\\
		&= \tilde{\beta}_{m}\tr(\bR_{\mathrm{AP},m}\bA)\bR_{\mathrm{RIS},m},\label{lem1_11}
	\end{align}
	where, in \eqref{lem1_11}, we have taken into account 
	the property    $ \EE\{\bV \bC\bV^{\H}\} =\tr (\bC) \Id_{M}$, which holds for any holds  $ \bV\in \mathbb{C}^{{M\times N}} $ with
	i.i.d entries of zero mean and unit variance, and $ \bC $ being
	a deterministic square matrix with per-dimension size of $ N $. Similarly, we have 
	\begin{align}
		\bW_{n}\bB\bW_{n}^{\H}= \tilde{\beta}_{n}\bR_{\mathrm{RIS},n}^{1/2}\tr(\bR_{\mathrm{AP},m}\bB)\bR_{\mathrm{RIS},n}^{1/2}\label{lem1_12}.	\end{align}
	
	By inserting \eqref{lem1_11} and \eqref{lem1_12} in \eqref{lem1_1}, we result in
	\begin{align}
		&\EE\{ \bh_{mk}^{\H}\bA\bh_{mk}\bh_{nk}^{\H}\bB\bh_{nk}\} 
		=\tilde{\beta}_{m}\tilde{\beta}_{n}\tr(\bR_{\mathrm{AP},m}\bA)\tr(\bR_{\mathrm{AP},m}\bB)\nn\\
		& \times \EE\{ \bq_{k}^{\H}\bPhi_{w_{k}}^{\H}\bR_{\mathrm{RIS},k}\bPhi_{w_{k}}\bq_{k} \bq_{k}^{\H}\bPhi_{w_{k}}^{\H}\bR_{\mathrm{RIS},k}\bPhi_{w_{k}}\bq_{k}\},\label{lem1_2}\\
		&=\tilde{\beta}_{m}\tilde{\beta}_{n}\tr(\bR_{\mathrm{AP},k}\bA)\tr(\bR_{\mathrm{AP},m}\bB) \EE\{| \bq_{k}^{\H}\bPhi_{w_{k}}^{\H}\bR_{\mathrm{RIS},k}\bPhi_{w_{k}}\bq_{k}|^{2}\},\label{lem1_3}\\
		&=\tilde{\beta}_{m}\tilde{\beta}_{n}\tr(\bR_{\mathrm{AP},k}\bA)\tr(\bR_{\mathrm{AP},m}\bB) \nn\\
		&		\times\EE\{\tilde{ \beta}_{k}^{2}|\bc_{k}^{\H}\bR_{\mathrm{RIS},k}^{1/2}\bPhi_{w_{k}}^{\H}\bR_{\mathrm{RIS},k}\bPhi_{w_{k}}\bR_{\mathrm{RIS},k}^{1/2}\bc_{k}|^{2}\},\label{lem1_4}\\
		&=\tilde{\beta}_{mk}\tilde{\beta}_{nk}\tr(\bR_{\mathrm{AP},m}\bA)\tr(\bR_{\mathrm{AP},m}\bB)  \nn\\
		&\times\Big(\!\!\big(\!\tr(\bR_{\mathrm{RIS},k}\bPhi_{w_{k}}^{\H}\bR_{\mathrm{RIS},k}\bPhi_{w_{k}})\!\big)^{2}\nn\\
		&+\tr(\bR_{\mathrm{RIS},k}\bPhi_{w_{k}}^{\H}\bR_{\mathrm{RIS},k}\bPhi_{w_{k}}\bR_{\mathrm{RIS},k}\bPhi_{w_{k}}^{\H}\bR_{\mathrm{RIS},k}\bPhi_{w_{k}})\Big),\label{lem1_5}\\
		&=\tr(\bR_{\mathrm{AP},m}\bA)\tr(\bR_{\mathrm{AP},m}\bB)  \Big(\!\!\tr(\tilde{\bR}_{mk}\tilde{\bR}_{nk})+\tr(\tilde{\bR}_{mk}\tilde{\bR}_{nk})\!\!\Big),\label{lem1_6}
	\end{align}
	where we have used \eqref{lem1_11} and \eqref{lem1_12} in \eqref{lem1_2}. Moreover, we have applied Lemma \ref{lem0proof1} in \eqref{lem1_5}.
	
	Next, \eqref{exp4} is obtained as
	\begin{align}
		&\EE\{ \bh_{mk}^{\H}\bA\bh_{ml}\bh_{nl}^{\H}\bB\bh_{nk}\}
		\nn\\
		&=\EE\{\bq_{k}^{\H}\bPhi_{w_{k}}^{\H}\bW_{m}\bA\bW_{m}^{\H}\bPhi_{w_{k}}\bq_{l}\bq_{l}^{\H}\bPhi_{w_{l}}^{\H}\bW_{n}\bB\bW_{n}^{\H}\bPhi_{w_{k}}\bq_{k}\}\label{exp41}\\
		&=\tilde{\beta}_{mk}\tilde{\beta}_{nk}\tr(\bR_{\mathrm{AP},m}\bA)\tr(\bR_{\mathrm{AP},m}\bB)\nn\\
		&\times \EE\{\bq_{k}^{\H}\bPhi_{w_{k}}^{\H}\bR_{\mathrm{RIS},m}\bPhi_{w_{l}}\bq_{l}\bq_{l}^{\H}\bPhi_{w_{l}}^{\H}\bR_{\mathrm{RIS},m}\bPhi_{w_{k}}\bq_{k}\}\label{exp53}\\
		&=\tilde{\beta}_{mk}\tilde{\beta}_{nk}\tr(\bR_{\mathrm{AP},m}\bA)\tr(\bR_{\mathrm{AP},m}\bB) \nn\\
		&\times\tr(\bR_{\mathrm{RIS},m}\bPhi_{w_{k}}^{\H}\bR_{\mathrm{RIS},m}\bPhi_{w_{l}}\bq_{l}\bq_{l}^{\H}\bPhi_{w_{l}}^{\H}\bR_{\mathrm{RIS},m}\bPhi_{w_{k}})\nn\\
		&=	\tr(\bR_{\mathrm{AP},m}\bA)\tr(\bR_{\mathrm{AP},m}\bB)	\tr(\tilde{\bR}_{mk}\tilde{\bR}_{nl}),		
	\end{align}
	where, in \eqref{exp53}, we have used \eqref{lem1_11} and \eqref{lem1_12}.
	The expectation in \eqref{exp5} is written as
	\begin{align}
		&\EE\{| \bh_{mk}^{\H}\bA\bh_{mk}|^{2}\}=\EE\{|\bq_{k}^{\H}\bPhi_{w_{k}}^{\H}\bW_{m}\bA\bW_{m}^{\H}\bPhi_{w_{k}}\bq_{k}|^{2}\}\nn\\
		&=\EE\{\bq_{k}^{\H}\bPhi_{w_{k}}^{\H}\bW_{m}\bA\bW_{m}^{\H}\bPhi_{w_{k}}\bq_{k}\bq_{k}^{\H}\bPhi_{w_{k}}^{\H}\bW_{m}\bA^{\H}\bW_{m}^{\H}\bPhi_{w_{k}}\bq_{k}\}\label{exp51}.
	\end{align}
	
	Since $ N $ is asymptotically large, $ \bW_{m}^{\H}\bPhi_{w_{k}}\bq_{k}\bq_{k}^{\H}\bPhi_{w_{k}}^{\H}\bW_{m} $ in \eqref{exp51} can be approximated by the law of large numbers as
	\begin{align}
		&	\bW_{m}^{\H}\bPhi_{w_{k}}\bq_{k}\bq_{k}^{\H}\bPhi_{w_{k}}^{\H}\bW_{m}\nn\\
		&	=
		\tilde{\beta}_{m}\bR_{\mathrm{AP},m}^{1/2}\bD_{m}^{\H}\bR_{\mathrm{RIS},m}^{1/2}\bPhi_{w_{k}}\bq_{k} \bq_{k}^{\H}\bPhi_{w_{k}}^{\H}\bR_{\mathrm{RIS},m}^{1/2}\bD_{m}\bR_{\mathrm{AP},m}^{1/2}\nn\\
		&\approx \tilde{\beta}_{m}\bR_{\mathrm{AP},m}\tr(\bR_{\mathrm{RIS},m}\bPhi_{w_{k}}\bq_{k}\bq_{k}^{\H}\bPhi_{w_{k}}^{\H}). \label{exp52}
	\end{align}
	
	Thus, we insert \eqref{exp52} in \eqref{exp51}, and we  obtain \eqref{exp5} as
	\begin{align}
		&\EE\{| \bh_{mk}^{\H}\bA\bh_{mk}|^{2}\}=\tilde{\beta}_{m}\tr(\bR_{\mathrm{RIS},m}\bPhi_{w_{k}}\bq_{k}\bq_{k}^{\H}\bPhi_{w_{k}}^{\H}) \nn\\
		&	\times\EE\{\bq_{k}^{\H}\bPhi_{w_{k}}^{\H}\bW_{m}\bA\bR_{\mathrm{AP},m}
		\bA^{\H}\bW_{m}^{\H}\bPhi_{w_{k}}\bq_{k}\}\nn\\
		&=\tilde{\beta}_{m}^{2}\tr(\bR_{\mathrm{AP},m}\bA\bR_{\mathrm{AP},m}\bA^{\H})\nn\\
		&		\times\EE\{\bq_{k}^{\H}\bPhi_{w_{k}}^{\H}\bR_{\mathrm{RIS},m}\bPhi_{w_{k}}\bq_{k}\bq_{k}^{\H}\bPhi_{w_{k}}^{\H}\bR_{\mathrm{RIS},m}\bPhi_{w_{k}}\bq_{k}\label{exp52}\}\\
		&=\tilde{\beta}_{m}^{2}\tr(\bR_{\mathrm{AP},m}\bA\bR_{\mathrm{AP},m}\bA^{\H})\EE\{|\bq_{k}^{\H}\bPhi_{w_{k}}^{\H}\bR_{\mathrm{RIS},m}\bPhi_{w_{k}}\bq_{k}|^{2}\}\nn\\
		&=\tilde{\beta}_{mk}^{2}\tr(\bR_{\mathrm{AP},m}\bA\bR_{\mathrm{AP},m}\bA^{\H})\nn\\
		&\times \EE\{|\bc_{k}^{\H}\bR_{\mathrm{RIS},k}^{1/2}\bPhi_{w_{k}}^{\H}\bR_{\mathrm{RIS},k}\bPhi_{w_{k}}\bR_{\mathrm{RIS},k}^{1/2}\bc_{k}|^{2}\}	\nn	\\
		&=\tilde{\beta}_{mk}^{2}\tr(\bR_{\mathrm{AP},m}\bA\bR_{\mathrm{AP},m}\bA^{\H})\nn\\
		&\times\Big(\!\!\tr(\bPhi_{w_{k}}^{\H}\bR_{\mathrm{RIS},k}\bPhi_{w_{k}}\bR_{\mathrm{RIS},k}\bPhi_{w_{k}}^{\H}\bR_{\mathrm{RIS},k})\nn\\
		&+\tr(\bPhi_{w_{k}}^{\H}\bR_{\mathrm{RIS},k}\bPhi_{w_{k}}^{\H}\bR_{\mathrm{RIS},k})\!\!\Big)\label{exp53}\\
		&=\tr(\bR_{\mathrm{AP},m}\bA\bR_{\mathrm{AP},m}\bA^{\H})\Big(\!\!\tr\big(\tilde{\bR}_{mk}^{2}\big)+\big(\!\tr\big(\tilde{\bR}_{mk}\big)\!\big)^{2}\!\Big),
	\end{align}
	where we have used  \eqref{lem1_11} in \eqref{exp52}, and Lemma  \ref{lem0proof1} in \eqref{exp53}.
	
	Finally, the expectation in \eqref{exp6} is written as
	\begin{align}
		&	\EE\{| \bh_{mk}^{\H}\bA\bh_{ml}|^{2}\}=\EE\{|\bq_{k}^{\H}\bPhi_{w_{k}}^{\H}\bW_{m}\bA\bW_{m}^{\H}\bPhi_{w_{l}}\bq_{l}|^{2}\}\nn\\
		&=\EE\{\bq_{k}^{\H}\bPhi_{w_{k}}^{\H}\bW_{m}\bA\bW_{m}^{\H}\bPhi_{w_{l}}\bq_{l}\bq_{l}^{\H}\bPhi_{w_{l}}^{\H}\bW_{m}\bA^{\H}\bW_{m}^{\H}\bPhi_{w_{k}}\bq_{k}\}\nn\\
		&=\tilde{\beta}_{l}\EE\{\bq_{k}^{\H}\bPhi_{w_{k}}^{\H}\bW_{m}\bA\bW_{m}^{\H}\bPhi_{w_{l}}\bR_{\mathrm{RIS},k}
		\bPhi_{w_{l}}^{\H}\bW_{m}\bA^{\H}\bW_{m}^{\H}\bPhi_{w_{k}}\bq_{k}\}\nn\\
		&=\tilde{\beta}_{l}\tilde{\beta}_{k}\EE\{\tr\left(\right.\bR_{\mathrm{RIS},k}\bPhi_{w_{k}}^{\H}\bW_{m}\bA\bW_{m}^{\H}\bPhi_{w_{l}}\bR_{\mathrm{RIS},k}
		\bPhi_{w_{l}}^{\H}\nn\\
		&\times \bW_{m}\bA^{\H}\bW_{m}^{\H}\bPhi_{w_{k}}\left.\right)\}.\label{exp61}
	\end{align}
	
	Using the  fact that $ N $	is   asymptotically large, $\bW_{m}\bA\bW_{m}^{\H}\bPhi_{w_{l}}\bR_{\mathrm{RIS},k}
	\bPhi_{w_{l}}^{\H}\bW_{m}$ can be approximated by the law of large numbers as
	\begin{align}
		&		\bW_{m}\bA\bW_{m}^{\H}\bPhi_{w_{l}}\bR_{\mathrm{RIS},k}
		\bPhi_{w_{l}}^{\H}\bW_{m}\nn\\
		&=\tilde{\beta}_{m}\bR_{\mathrm{AP},m}^{1/2}\bD_{m}^{\H}\bR_{\mathrm{RIS},k}^{1/2}\bPhi_{w_{l}}\bR_{\mathrm{RIS},k}\bPhi_{w_{l}}^{\H}\bR_{\mathrm{RIS},k}^{1/2}\bD_{m}\bR_{\mathrm{AP},m}^{1/2}\nn\\
		&\approx\tilde{\beta}_{m}\bR_{\mathrm{AP},m}\tr(\bPhi_{w_{l}}\bR_{\mathrm{RIS},k}
		\bPhi_{w_{l}}^{\H}\bR_{\mathrm{RIS},k})\nn\\
		&=\frac{1}{\tilde{\beta}_{l}}\bR_{\mathrm{AP},m}\tr(\tilde{\bR}_{ml}).\label{exp62}
	\end{align}
	
	By substituting \eqref{exp62} in \eqref{exp61}, we compute \eqref{exp6} as
	\begin{align}
		&	\EE\{| \bh_{mk}^{\H}\bA\bh_{ml}|^{2}\}\approx\tilde{\beta}_{k}\tr(\tilde{\bR}_{ml})\nn\\
		&\times\EE\{\tr(\bR_{\mathrm{RIS},k}\bPhi_{w_{k}}^{\H}\bW_{m}\bA\bR_{\mathrm{AP},m}
		\bA^{\H}\bW_{m}^{\H}\bPhi_{w_{k}})\}\nn\\
		&= \tilde{\beta}_{mk}\tr(\tilde{\bR}_{ml})\tr(\bR_{\mathrm{RIS},k}
		\bPhi_{w_{k}}^{\H}\bR_{\mathrm{RIS},k}\bPhi_{w_{k}})\tr(\bR_{\mathrm{AP},m}\bA\bR_{\mathrm{AP},m}\bA^{\H})\nn\\
		&=\tr(\tilde{\bR}_{mk})\tr(\tilde{\bR}_{ml})\tr(\bR_{\mathrm{AP},m}\bA\bR_{\mathrm{AP},m}\bA^{\H}).
	\end{align}

	\section{Proof of Lemma~\ref{PropositionDirectChannel}}\label{lem1}
	We follow similar steps to \cite{Kay}. In particular, the LMMSE estimator of $ \bh_{mk} $ is derived by minimizing $ \tr\!\big(\EE\big\{\!(\hat{\bh}_{mk}-{\bh}_{mk})(\hat{\bh}_{mk}-{\bh}_{mk})^{\H}\!\big\}\!\big) $ as
	\begin{align}
		\hat{\bh}_{mk} =\EE\!\left\{\by_{\mathrm{p},mk}\bh_{mk}^{\H}\right\}\left(\EE\!\left\{\by_{\mathrm{p},mk}\by_{\mathrm{p},mk}^{\H}\right\}\right)^{-1}\by_{\mathrm{p},mk}.\label{Cor6}
	\end{align}
	Given that the channel and the receiver noise are uncorrelated, we obtain
	\begin{align}
		\EE\left\{\by_{\mathrm{p},mk}\bh_{mk}^{\H}\right\}
		&=\EE\left\{\bh_{mk}\bh_{mk}^{\H}\right\}\nn\\
		&=\sqrt{p \tau}\bR_{mk}.\label{Cor0}
	\end{align}
	The second term in \eqref{Cor6} is written as
	\begin{align}
		\bQ_{m}=	\EE\left\{\by_{\mathrm{p},mk}\by_{\mathrm{p},mk}^{\H}\right\}={p \tau}\sum_{l\in \mathcal{P}_{k} }\bR_{ml}+\Id_{M}.\label{Cor1}
	\end{align}
	The LMMSE estimate in \eqref{estim1} is derived by substituting  \eqref{Cor0} and \eqref{Cor1} into \eqref{Cor6}. Also,   the covariance matrix of the estimated channel is
	\begin{align}
		\EE\left\{\hat{\bh}_{mk}	\hat{\bh}_{mk}^{\H}\right\}=\bR_{mk}\bQ_{m}\bR_{mk}.\label{var1}
	\end{align} 
	
	\section{Proof of Proposition~\ref{LemmaGradients}}\label{lem2}
	First, we focus on the derivation of  $	\nabla_{\thetv^{t}}\mathrm{NMSE}_{mk}(\thetv,\betv) $. According to \eqref{nmse1}, we have
	\begin{align}
		&		\nabla_{\thetv^{t}}\mathrm{NMSE}_{mk}(\thetv,\betv)
		\nn\\
		&=\frac{\tr(\bPsi_{mk})\nabla_{\boldsymbol{\theta}^{t}}\tr(\bR_{mk})-\tr(\bR_{mk})\nabla_{\boldsymbol{\theta}^{t}}\tr(\bPsi_{mk})}{\tr^{2}(\bR_{mk})}.
	\end{align}
	Based on $ \bR_{mk} $ and $ \bPsi_{mk} $, we observe that $\nabla_{\thetv^{r}}\mathrm{NMSE}_{mk} = 0$, when UE $k$ is in the reflection region, i.e., if $w_k=r$. Hence, we focus on finding $\nabla_{\thetv^{t}}\mathrm{NMSE}_{mk}$ when $w_k=t$. For this reason, we  write $\bR_{mk}$ as
	\begin{align}
		\bR_{mk}&=\tilde{\beta}_{mk}\tr(\bR_{\mathrm{RIS},k} \bPhi_{t} \bR_{\mathrm{RIS},k}  \bPhi_{t}^{\H})\bR_{\mathrm{AP},m}\nn\\ 
		&=\tilde{\beta}_{mk}\tr(\bA_{t}\bPhi_{t}^{\H}  )\bR_{\mathrm{AP},m}\label{cov1t},
	\end{align}
	where $\mathbf{A}_{t}=\mathbf{R}_{\mathrm{RIS},k}\bPhi_{t}\mathbf{R}_{\mathrm{RIS},k}$. In the case  $w_k=r$, we define $\mathbf{A}_{r}=\mathbf{R}_{\mathrm{RIS},k}\bPhi_{r}\mathbf{R}_{\mathrm{RIS},k}$. Next, we focus on the differentials $d(\mathbf{R}_{mk})$ and $d(\boldsymbol{\Psi}_{k})$ that are derived as follows. Regarding $d(\mathbf{R}_{mk})$, we have
	\begin{align}
		&	d(\mathbf{R}_{mk}) 
		=\tilde{\beta}_{mk}\mathbf{R}_{\mathrm{AP}}\tr\bigl(\mathbf{A}_{t}\herm d(\bPhi_{t})+\mathbf{A}_{t}d\bigl(\bPhi_{t}\herm\bigr)\bigr)\nn\\
		&=\tilde{\beta}_{mk}\mathbf{R}_{\mathrm{AP}}\bigl(\bigl(\diag\bigl(\mathbf{A}_{t}\herm\diag(\boldsymbol{{\beta}}^{t})\bigr)\bigr)\trans d(\boldsymbol{\theta}^{t})\nn\\
		&+\bigl(\diag\bigl(\mathbf{A}_{t}\diag(\boldsymbol{{\beta}}^{t})\bigr)\bigr)\trans d(\boldsymbol{\theta}^{t\ast})\bigr)\label{eq:dRk},
	\end{align}
	where \eqref{eq:dRk} is obtained because  $\bPhi_{t}$ is diagonal. Based on \eqref{eq:dRk}, we can conclude that for $w_k=t$, we obtain
	\begin{align}
		\nabla_{\thetv^{t}}\tr(\bR_{mk})&=\frac{\partial}{\partial{\thetv^{t\ast}}}\tr(\bR_{mk})\nn\\
		&=\tilde{\beta}_{mk}\tr(\mathbf{R}_{\mathrm{AP}})\diag\bigl(\mathbf{A}_{t}\diag(\boldsymbol{{\beta}}^{t})\bigr)
	\end{align}
	for $w_k=t$, which indeed proves \eqref{derivtheta}. Similarly,  we can easily obtain $ \nabla_{\thetv^{r}}\tr(\bR_{mk}) $ but we omit the details for brevity.

	Regarding $ d(\boldsymbol{\Psi}_{mk}) $, after application of  \cite[Eq. (3.35)]{hjorungnes:2011}, we obtain
	\begin{align}
		&		d(\boldsymbol{\Psi}_{mk})=p \tau d(\mathbf{R}_{mk}\mathbf{Q}_{k}\mathbf{R}_{mk})\nn\\
		&=p \tau (d(\mathbf{R}_{mk})\mathbf{Q}_{k}\mathbf{R}_{mk}+\mathbf{R}_{mk}d(\mathbf{Q}_{k})\mathbf{R}_{mk}+\mathbf{R}_{mk}\mathbf{Q}_{k}d(\mathbf{R}_{mk})).\label{eq:dPsik}
	\end{align}
	The next step includes the use of \cite[eqn. (3.40)]{hjorungnes:2011}, which gives
	\begin{align}
		d(\mathbf{Q}_{k})  &=d\bigl({p \tau}\sum_{l\in \mathcal{P}_{k} }\bR_{ml}+\Id_{M}\bigr)^{-1}\nn\\
		&=-\bigl({p \tau}\sum_{l\in \mathcal{P}_{k} }\bR_{ml}+\Id_{M}\bigr)^{-1}d\bigl({p \tau}\sum_{l\in \mathcal{P}_{k} }\bR_{ml}+\Id_{M}\bigr)\nn\\
		&\times\bigl({p \tau}\sum_{l\in \mathcal{P}_{k} }\bR_{ml}+\Id_{M}\bigr)^{-1}\nonumber \\
		& =-p \tau \sum_{l\in \mathcal{P}_{k} } \mathbf{Q}_{k}d(\bR_{ml})\mathbf{Q}_{k}.\label{eq:dQk}
	\end{align}
	
	The combination of (\ref{eq:dPsik}) and (\ref{eq:dQk})
	results in
	\begin{align}
		&d(\boldsymbol{\Psi}_{mk})=d(\mathbf{R}_{mk})\mathbf{Q}_{k}\mathbf{R}_{mk}\!-\!p \tau \sum_{l\in \mathcal{P}_{k} } \mathbf{R}_{mk}\mathbf{Q}_{k}d(\bR_{ml})\mathbf{Q}_{k}\mathbf{R}_{mk}\nn\\
		&+\mathbf{R}_{mk}\mathbf{Q}_{k}d\mathbf{R}_{mk}.\label{eq:dPsik-1}
	\end{align}
	
	Thus,  for $w_k=t$, we obtain 
	\begin{align}
		\nabla_{\thetv^{t}}\tr(\bPsi_{mk})&=	\nu_{mk}\diag\bigl(\mathbf{A}_{t}\diag(\boldsymbol{{\beta}}^{t})\bigr)
	\end{align}
	with
	\begin{equation}
		\nu_{mk}\!=\!\hat{\beta}_{mk}\!\tr\bigl(\!\bigl(\mathbf{Q}_{k}\mathbf{R}_{mk}+\mathbf{R}_{mk}\mathbf{Q}_{k}-p \tau \sum_{l\in \mathcal{P}_{k} }\mathbf{Q}_{k}\mathbf{R}_{mk}^{2}\mathbf{Q}_{k}\bigr)\mathbf{R}_{\mathrm{AP}}\bigr).
	\end{equation}
	Similarly,  we can easily obtain $ \nabla_{\thetv^{r}}\tr(\bPsi_{mk}) $.

	For the derivation of  $\nabla_{\betv^{i}}\mathrm{NMSE}_{mk}(\thetv,\betv) $, we have
	\begin{align}
		&		\nabla_{\betv^{i}}\mathrm{NMSE}_{mk}(\thetv,\betv)\nn\\
		&=\frac{\tr(\bPsi_{mk})\nabla_{\boldsymbol{\beta}^{i}}\tr(\bR_{mk})-\tr(\bR_{mk})\nabla_{\boldsymbol{\beta}^{i}}\tr(\bPsi_{mk})}{\tr^{2}(\bR_{mk})}, i=t,r.
	\end{align}

	For $ \nabla_{\boldsymbol{\beta}^{i}}\tr(\bR_{mk}) $,  we can write 
	\begin{subequations}\label{dRk_beta_t}
		\begin{align}
			&			d(\mathbf{R}_{k})  =\hat{\beta}_{k}\mathbf{R}_{\mathrm{BS}}\tr\bigl(\mathbf{A}_{t}\herm d(\bPhi_{t})+\mathbf{A}_{t}d\bigl(\bPhi_{t}\herm\bigr)\bigr)\nn\\
			& =\hat{\beta}_{k}\mathbf{R}_{\mathrm{BS}}\bigl(\diag\bigl(\mathbf{A}_{t}\herm\diag(\btheta^{t})\bigr)^{\T}d(\boldsymbol{\beta}^{t})
			\nn\\
			&+\diag\bigl(\mathbf{A}_{t}\diag(\btheta^{t\ast})\bigr)^{\T}d(\boldsymbol{\beta}^{t})\bigr)\nn\\
			& =2\hat{\beta}_{k}\mathbf{R}_{\mathrm{BS}}\Re\bigl\{\diag\bigl(\mathbf{A}_{t}\herm\diag(\btheta^{t})\bigr\}^{\T} d(\boldsymbol{\beta}^{t}).
		\end{align}
		Hence, we have
		\begin{align}
			\!\!\!\nabla_{\betv^{i}}\tr(\bR_{mk})	=2\tilde{\beta}_{mk}\tr(\mathbf{R}_{\mathrm{AP}})\Re\bigl\{\diag\bigl(\mathbf{A}_{i}\herm\diag(\btheta^{i})\bigr\}.
		\end{align}
		In a similar way, we obtain
		\begin{align}
			\nabla_{\thetv^{i}}\tr(\bPsi_{mk})&=2	\nu_{mk}\diag\bigl(\mathbf{A}_{t}\Re\bigl\{\diag\bigl(\mathbf{A}_{i}\herm\diag(\btheta^{i})\bigr\},
		\end{align}
		which concludes the proof.
	\end{subequations}

	\section{Proof of Proposition~\ref{Proposition:DLSINR}}\label{Proposition1}
	We start with the derivation of the desired signal part by exploiting that the channel estimate and the channel estimation error are uncorrelated. We have
	\begin{align}
		\mathrm{DS}_{k}&=\sqrt{\rho_{\mathrm{d}}}\EE\Big\{\sum_{m=1}^{M}  \sqrt{\eta_{mk}}(\hat{\bh}_{mk}+\bee_{mk})^\H\hat{\bh}_{mk}\Big\}\label{ds1}\\
		&=\sqrt{\rho_{\mathrm{d}}}\sum_{m=1}^{M}  \sqrt{\eta_{mk}}\EE\{\hat{\bh}^\H_{mk}\hat{\bh}_{mk}\}\label{ds2}\\
		&=\sqrt{\rho_{\mathrm{d}}}\sum_{m=1}^{M}  \sqrt{\eta_{mk}}\tr(\bPsi_{mk}),\label{ds4}
	\end{align}
	where \eqref{ds1} results by replacing the channel with the channel estimate and the channel estimation error. Next, \eqref{ds2} takes into account that the channel estimate and the error are uncorrelated. In  \eqref{ds3}, we have used  the property  $\bx^{\H}\by = \tr(\by \bx^{\H})$ for any vectors $\bx$, $\by$.
	
	In the case of the beamforming uncertainty term in the denominator of \eqref{SINR}, we have
	\begin{align}
		&\EE\{|\mathrm{BU}_{k}|^{2}\}=\rho_{\mathrm{d}}\EE\Big\{\Big|\sum_{m=1}^{M}g_{mk}\Big|^{2}\Big\}\nn\\
		&=\rho_{\mathrm{d}}\sum_{m=1}^{M} \sum_{n=1, n\ne m}^{M} \EE\{g_{mk}{g}_{nk}^{*}\}+\rho_{\mathrm{d}}\sum_{m=1}^{M}\EE\{|g_{mk}|^{2}\}\nn\\
		&=\mathcal{I}_{1}+\mathcal{I}_{2},
	\end{align}
	where $ g_{mk}=  \sqrt{\eta_{mk}}{\bh}^\H_{mk}\hat{\bh}_{mk}-\sqrt{\eta_{mk}}\EE\Big\{  {\bh}^\H_{mk}\hat{\bh}_{mk}\Big\} $. Next, $ \mathcal{I}_{1} $ can be written as 
	\begin{align}
		\mathcal{I}_{1}&=\rho_{\mathrm{d}}\sum_{m=1}^{M} \sum_{n=1, n\ne m}^{M} \sqrt{\eta_{mk}} \sqrt{\eta_{nk}} \EE\{{\bh}^\H_{mk}\hat{\bh}_{mk}\hat{\bh}_{nk}^{\H}{\bh}_{nk}\}\nn\\
		&-\rho_{\mathrm{d}} \sqrt{\eta_{mk}} \sqrt{\eta_{nk}}\tr(\bPsi_{mk})\tr(\bPsi_{nk}),\label{I1}
	\end{align}
	where we have used the identities $ \EE\{(X-\EE\{X\})\}\EE\{(Y-\EE\{Y\})\}=\EE\{XY\}-\EE\{X\}\EE\{Y\} $ and $ \EE\{|X-\EE\{X\}|^{2}\}=\EE\{|X|^{2}\}-|\EE\{X\}|^{2} $.
	The first part of $ 	\mathcal{I}_{1} $ is obtained as
	\begin{align}
		&\EE\{{\bh}^\H_{mk}\hat{\bh}_{mk}\hat{\bh}_{nk}^{\H}{\bh}_{nk}\}\nn\\
		&=p \tau \EE\{{\bh}^\H_{mk}\bR_{mk}\bQ_{m} \by_{\mathrm{p},mk}\by_{\mathrm{p},nk}^{\H}\bQ_{n}\bR_{nk} {\bh}_{nk}\}\label{estim2}\\
		&=p \tau \EE\{{\bh}^\H_{mk}\bR_{mk}\bQ_{m}
		(\sqrt{p \tau}\sum_{l\in \mathcal{P}_{k} }\bh_{ml}+\bz_{\mathrm{p},mk}) \nn\\
		&\times(\sqrt{p \tau}\sum_{l\in \mathcal{P}_{k} }\bh_{nl}+\bz_{\mathrm{p},nk})^{\H}\bQ_{n}\bR_{nk} {\bh}_{nk}\}\label{estim3}\\
		&=(p \tau)^{2} \EE\{{\bh}^\H_{mk}\bR_{mk}\bQ_{m}\Big(\sum_{l\in \mathcal{P}_{k} }\bh_{ml}\Big)\Big(\sum_{l\in \mathcal{P}_{k} }\bh_{nl}\Big)^{\H}\bQ_{n}\bR_{nk} {\bh}_{nk}\}\label{estim4}\\
		&=(p \tau)^{2} \EE\{{\bh}^\H_{mk}\bR_{mk}\bQ_{m}\Big(\sum_{l\in \mathcal{P}_{k} }\bh_{ml}\bh_{nl}^{\H}\Big)\bQ_{n}\bR_{nk} {\bh}_{nk}\}\label{estim4}\\
		&=(p \tau)^{2} \EE\{{\bh}^\H_{mk}\bR_{mk}\bQ_{m}\bh_{mk}\bh_{nk}^{\H}\bQ_{n}\bR_{nk} {\bh}_{nk}\}\nn\\
		&+(p \tau)^{2}\sum_{l\in \mathcal{P}_{k}\backslash{k} } \EE\{{\bh}^\H_{mk}\bR_{mk}\bQ_{m}\bh_{mk}\bh_{nk}^{\H}\bQ_{n}\bR_{nk} {\bh}_{nk}\}\label{estim5}\\
		&=(p \tau)^{2} \tr(\bR_{\mathrm{AP},m}\bR_{mk}\bQ_{m})\tr(\bR_{\mathrm{AP},m}\bQ_{n}\bR_{nk})\nn\\
		&\times\big(\tr(\tilde{\bR}_{mk})\tr(\tilde{\bR}_{nk})+\tr(\tilde{\bR}_{mk}\tilde{\bR}_{nk})\big)\nn\\
		&+(p \tau)^{2}\sum_{l\in \mathcal{P}_{k}\backslash{k} } \tr(\bR_{\mathrm{AP},m}\bR_{mk}\bQ_{m})\tr(\bR_{\mathrm{AP},m}\bQ_{n}\bR_{nk})			 \tr(\tilde{\bR}_{mk}\tilde{\bR}_{nl})\label{estim5},
	\end{align}
	where, in \eqref{estim5}, we have used \eqref{exp3} and \eqref{exp4} from Lemma \ref{lem0proof}. By inserting \eqref{estim5} into \eqref{I1}, we obtain $ \mathcal{I}_{1} $.
	
	Moreover, $ \mathcal{I}_{2} $ can be rewritten as
	\begin{align}
		&	\mathcal{I}_{2}=\rho_{\mathrm{d}}\!\!\sum_{m=1}^{M}\eta_{mk}\EE\Big\{ | {\bh}^\H_{mk}\hat{\bh}_{mk}|^{2}\Big\}-\rho_{\mathrm{d}}\!\!\sum_{m=1}^{M}\eta_{mk}|\EE\Big\{ \| \hat{\bh}_{mk}\|^{2}\}|^{2}\nn\\
		&=\!\rho_{\mathrm{d}}\!\!\sum_{m=1}^{M}\!\eta_{mk}\EE\Big\{ | {\bh}^\H_{mk}\hat{\bh}_{mk}|^{2}\Big\}\!-\!\rho_{\mathrm{d}}\!\!\sum_{m=1}^{M}\!\eta_{mk}\tr^{2}(\bPsi_{mk})\label{I2}, 
	\end{align}
	where we have applied the properties  $ \EE\{|X-\EE\{X\}|^{2}\}=\EE\{|X|^{2}\}-|\EE\{X\}|^{2} $ and $ \EE\{|X+Y|^{2}\} =\EE\{|X|^{2}\}+\EE\{|Y|^{2}\}$ in the case of zero-mean uncorrelated random variables.
	Now, we focus on the first part of \eqref{I2}. We substitute \eqref{estim1}, and we obtain
	\begin{align}
		&	\EE\Big\{ | {\bh}^\H_{mk}\hat{\bh}_{mk}|^{2}\Big\}=p \tau	\EE\Big\{ | {\bh}^\H_{mk}\bR_{mk}\bQ_{m} \by_{\mathrm{p},mk}|^{2}\Big\}\\
		&=p \tau \EE\Big\{|{\bh}^\H_{mk}\bR_{mk}\bQ_{m} (\sqrt{p \tau}\sum_{l\in \mathcal{P}_{k}}{\bh}_{ml}+\bz_{\mathrm{p},mk})|^{2}\Big\}\label{I23}\\
		&=p \tau \EE\Big\{|\sqrt{p \tau}{\bh}^\H_{mk}\bR_{mk}\bQ_{m}{\bh}_{mk}\nn\\
		& +
		\sqrt{p \tau}{\bh}^\H_{mk}\bR_{mk}\bQ_{m}\sum_{l\in \mathcal{P}_{k}\backslash{k}}{\bh}_{ml}+{\bh}^\H_{mk}\bR_{mk}\bQ_{m}\bz_{\mathrm{p},mk}|^{2}\Big\}\label{I24}\\
		&=(p \tau)^{2} \tr(\bR_{\mathrm{AP},m}\bR_{mk}\bQ_{m}\bR_{\mathrm{AP},m}\bQ_{m}\bR_{mk})\nn\\
		&\times\Big(\!\!\tr\big(\tilde{\bR}_{mk}^{2}\big)+\big(\!\tr\big(\tilde{\bR}_{mk}\big)\!\big)^{2}\!\Big)\nn\\
		&+	(p \tau)^{2}\sum_{l\in \mathcal{P}_{k}\backslash{k}}\Big(\tr(\tilde{\bR}_{ml})\tr(\bR_{\mathrm{AP},m}\bR_{mk}\bQ_{m}\bR_{\mathrm{AP},m}\bQ_{m}\bR_{mk})\nn\\
		&\times\tr\big(\tilde{\bR}_{mk}\big)\!\!\Big)^{2}\}+	p \tau\EE\{|{\bh}^\H_{mk}\bR_{mk}\bQ_{m}\bz_{\mathrm{p},mk}|^{2}\}\label{I25},
	\end{align} 
	where, in \eqref{I25}, we have used \eqref{exp5} and \eqref{exp6} from Lemma~\ref{lem0proof}. Regarding the last term in \eqref{I25}, we have
	\begin{align}
		&\EE\{|{\bh}^\H_{mk}\bR_{mk}\bQ_{m}\bz_{\mathrm{p},mk}|^{2}\}\nn\\
		&=	\EE\{{\bh}^\H_{mk}\bR_{mk}\bQ_{m}\bz_{\mathrm{p},mk}\bz_{\mathrm{p},mk}^{\H}\bQ_{m}\bR_{mk}{\bh}_{mk}\}\nn\\
		&=\EE\{{\bh}^\H_{mk}\bR_{mk}\bQ_{m}\bQ_{m}\bR_{mk}{\bh}_{mk}\}\nn\\
		&=\tr(\bR_{mk}^{2}\bQ_{m}^{2}),\label{noise1}
	\end{align}
	where we have used \eqref{exp2} of Lemma~\ref{lem0proof}. Hence, we substitute \eqref{noise1} in \eqref{I25}, and we obtain 
	\begin{align}
		&\EE\Big\{ | {\bh}^\H_{mk}\hat{\bh}_{mk}|^{2}\Big\}=(p \tau)^{2} \tr(\bR_{\mathrm{AP},m}\bR_{mk}\bQ_{m}\bR_{\mathrm{AP},m}\bQ_{m}\bR_{mk})\nn\\
		&		\times\Big(\!\!\tr\big(\tilde{\bR}_{mk}^{2}\big)+\big(\!\tr\big(\tilde{\bR}_{mk}\big)\!\big)^{2}\!\Big)+	(p \tau)^{2}\!\!\!\sum_{l\in \mathcal{P}_{k}\backslash{k}}\!\!\!|\tr(\tilde{\bR}_{ml})\nn\\
		&\times\tr(\bR_{\mathrm{AP},m}\bR_{mk}\bQ_{m}\bR_{\mathrm{AP},m}\bQ_{m}\bR_{mk})
		\tr(\tilde{\bR}_{mk})|^{2}
		\nn\\
		&	+	p \tau\tr(\bR_{mk}^{2}\bQ_{m}^{2})\label{I26}.
	\end{align}
	By substituting \eqref{I26} in \eqref{I2}, we obtain $\mathcal{I}_{2}  $. Having obtained $ \mathcal{I}_{1} $ and $ \mathcal{I}_{2} $, we result in  $ \EE\{|\mathrm{BU}_{k}|^{2}\} $.

	Regarding the interference term, we have
	\begin{align}
		&	\!\!\!\EE\{|\mathrm{UI}_{ik}|^{2}\}\!=\! \EE\{|\sum_{m=1}^{M}  \sqrt{\eta_{mi}}{\bh}^\H_{mk}\hat{\bh}_{mi}|^{2}\}\nn\\
		&=p \tau \EE\{|\sum_{m=1}^{M}  \sqrt{\eta_{mi}}{\bh}^\H_{mk}\bR_{mi}\bQ_{m}(\sqrt{p \tau}\sum_{l\in \mathcal{P}_{i}}{\bh}_{ml}+\bz_{\mathrm{p},mi})|^{2}\}\nn\\
		&=(p \tau)^{2} \EE\{|\sum_{m=1}^{M}  \sqrt{\eta_{mi}}{\bh}^\H_{mk}\bR_{mi}\bQ_{m}(\sum_{l\in \mathcal{P}_{i}}{\bh}_{ml})|^{2}\nn\\
		&+p \tau\EE\{||\sum_{m=1}^{M}  \sqrt{\eta_{mi}}{\bh}^\H_{mk}\bR_{mi}\bQ_{m}\bz_{\mathrm{p},mi})|^{2}\}\nn\\
		&=(p \tau)^{2} \EE\{|\sum_{m=1}^{M}  \sqrt{\eta_{mi}}{\bh}^\H_{mk}\bR_{mi}\bQ_{m}(\sum_{l\in \mathcal{P}_{i}}{\bh}_{ml})|^{2}\nn\\
		&+p \tau\sum_{m=1}^{M}  {\eta_{mi}}\tr(\bR_{mi}\bQ_{m}\bQ_{m}\bR_{mi}\bR_{mk})\}\label{int2},
	\end{align}
	where
	\begin{align}
		& \EE\{|\sum_{m=1}^{M}  \sqrt{\eta_{mi}}{\bh}^\H_{mk}\bR_{mi}\bQ_{m}(\sum_{l\in \mathcal{P}_{i}}{\bh}_{ml})|^{2}\nn\\
		&=
		\sum_{m=1}^{M}\sum_{n=1}^{M}\sqrt{\eta_{mi}\eta_{ni}}\nn\\
		&		\times\EE\{{\bh}^\H_{mk}\bR_{mi}\bQ_{m}(\sum_{l\in \mathcal{P}_{i}}{\bh}_{ml})(\sum_{l\in \mathcal{P}_{i}}{\bh}_{ml})^{\H}\bQ_{n}\bR_{ni}{\bh}_{nk}\}\nn\\
		&	=\sum_{m=1}^{M}\sum_{n=1}^{M}\sqrt{\eta_{mi}\eta_{ni}}\sum_{l\in \mathcal{P}_{i}}\EE\{{\bh}^\H_{mk}\bR_{mi}\bQ_{m}{\bh}_{ml}{\bh}_{nl}^{\H}\bQ_{n}\bR_{ni}{\bh}_{nk}\}\nn\\
		&=	\sum_{m=1}^{M}\sqrt{\eta_{mi}\eta_{ni}}\sum_{l\in \mathcal{P}_{i}}\EE\{{\bh}^\H_{mk}\bR_{mi}\bQ_{m}{\bh}_{ml}{\bh}_{ml}^{\H}\bQ_{n}\bR_{ni}{\bh}_{nk}\}\nn\\
		&+	\sum_{m=1}^{M}\sum_{n\ne m}^{M}\sqrt{\eta_{mi}\eta_{ni}}\sum_{l\in \mathcal{P}_{i}}\EE\{{\bh}^\H_{mk}\bR_{mi}\bQ_{m}{\bh}_{ml}{\bh}_{nl}^{\H}\bQ_{n}\bR_{ni}{\bh}_{nk}\}\nn\\
		&=\mathcal{I}_{3}+\mathcal{I}_{4}.
	\end{align}
	Now, we select cases. If $ i\not\in \mathcal{P}_{k} $, then $ \mathcal{P}_{i}\cap\mathcal{P}_{k} =\emptyset$ or $ l\ne k $. In this case, we have
	\begin{align}
		&		\mathcal{I}_{3}=\EE\{|{\bh}^\H_{mk}\bR_{mi}\bQ_{m}{\bh}_{ml}|^{2}\}\label{I31}\\
		&=\tr(\bR_{\mathrm{AP},m}\bR_{mi}\bQ_{m}\bR_{\mathrm{AP},m}\bQ_{m}\bR_{mi})\tr(\tilde{\bR}_{ml})\tr(\tilde{\bR}_{mk}),
	\end{align}
	where we have used \eqref{exp6} of Lemma~\ref{lem0proof}. Also, we have
	\begin{align}
		&	\mathcal{I}_{4}=	\sum_{m=1}^{M}\sum_{n\ne m}^{M}\sqrt{\eta_{mi}\eta_{ni}}\sum_{l\in \mathcal{P}_{i}}\tr(\bR_{\mathrm{AP},m}\bR_{mi}\bQ_{m})\nn\\
		&\times	\tr(\bR_{\mathrm{AP},m}\bQ_{n}\bR_{ni})	 \tr(\tilde{\bR}_{mk}\tilde{\bR}_{nl}),
	\end{align}
	where we have used \eqref{exp4} of Lemma~\ref{lem0proof}.
	
	If $ i\not\in \mathcal{P}_{k}\backslash{k} $, then $ \mathcal{P}_{i}=\mathcal{P}_{k}$. In this case, we obtain
	\begin{align}
		&		\sum_{l\in \mathcal{P}_{i}}	\mathcal{I}_{3}=	\sum_{l\in \mathcal{P}_{k}}	\mathcal{I}_{3}\nn\\
		&=\EE\{|{\bh}^\H_{mk}\bR_{mi}\bQ_{m}{\bh}_{mk}|^{2}\}+
		\sum_{l\in \mathcal{P}_{k}\backslash{k}}	\EE\{|{\bh}^\H_{mk}\bR_{mi}\bQ_{m}{\bh}_{ml}|^{2}\}\nn\\
		&=\tr(\bR_{\mathrm{AP},m}\bR_{mi}\bQ_{m}\bR_{\mathrm{AP},m}\bQ_{m}\bR_{mi})\Big(\!\!\tr\big(\tilde{\bR}_{mk}^{2}\big)+\big(\!\tr\big(\tilde{\bR}_{mk}\big)\!\big)^{2}\!\Big)\nn\\
		&+
		\sum_{l\in \mathcal{P}_{k}\backslash{k}}\tr(\bR_{\mathrm{AP},m}\bR_{mi}\bQ_{m}\bR_{\mathrm{AP},m}\bQ_{m}\bR_{mi})\tr(\tilde{\bR}_{ml})\tr(\tilde{\bR}_{mk}),
	\end{align}
	where we have used \eqref{exp5} and \eqref{exp6}  of Lemma~\ref{lem0proof}.
	
	Also, we have 
	\begin{align}
		&	\sum_{l\in \mathcal{P}_{i}}	\mathcal{I}_{4}=	\sum_{l\in \mathcal{P}_{k}}	\mathcal{I}_{4}\nn\\
		&=\EE\{{\bh}^\H_{mk}\bR_{mi}\bQ_{m}{\bh}_{mk}{\bh}_{nk}^{\H}\bQ_{n}\bR_{ni}{\bh}_{nk}\}\nn\\
		&+\sum_{l\in \mathcal{P}_{k}\backslash{k}}\EE\{{\bh}^\H_{mk}\bR_{mi}\bQ_{m}{\bh}_{ml}{\bh}_{nl}^{\H}\bQ_{n}\bR_{ni}{\bh}^\H_{nk}\}\nn\\
		&=\tr(\bR_{\mathrm{AP},m}\bR_{mi}\bQ_{m})\tr(\bR_{\mathrm{AP},m}\bQ_{n}\bR_{ni})\nn\\
		&\times\big(\tr(\bR_{mk})\tr(\bR_{nk})+\tr(\bR_{mk}\bR_{nk})\big)\nn\\
		&+\sum_{l\in \mathcal{P}_{k}\backslash{k}}\tr(\bR_{\mathrm{AP},m}\bR_{mi}\bQ_{m})\tr(\bR_{\mathrm{AP},m}\bQ_{n}\bR_{ni}) \tr(\tilde{\bR}_{mk}\tilde{\bR}_{nl}),
	\end{align}
	where we have used \eqref{exp3} and \eqref{exp4}  of Lemma~\ref{lem0proof}. Substitution of the corresponding $ \mathcal{I}_{3} $ and $ \mathcal{I}_{4} $ into \eqref{int2} concludes the proof of $ \EE\{|\mathrm{UI}_{ik}|^{2}\} $.

\end{appendices}

\bibliographystyle{IEEEtran}

\bibliography{IEEEabrv,mybib}

\end{document}